\documentclass[journal]{IEEEtran}
\usepackage{xcolor}

\usepackage{amsmath}
\usepackage{amssymb}
\usepackage{graphicx}
\usepackage{booktabs}
\usepackage{multirow}
\usepackage{wrapfig}
\usepackage{capt-of}
\usepackage{subcaption}
\usepackage{algorithm}
\usepackage{algorithmic}
\usepackage{enumitem}
\usepackage{epstopdf}
\usepackage[colorlinks=true, linkcolor=blue, citecolor=blue]{hyperref}

\newtheorem{lemma}{Lemma}

\allowdisplaybreaks[4]

\title{Aging-Aware Online Distributed Scheduling for Lifecycle Carbon
Reduction in Geo-Distributed Data Centers}

\author{
Junyu Lin,
Wenjie Liu,
Shunbo Lei,~\IEEEmembership{Senior Member, IEEE},
Wentian Lu,
Jianhui Wang,~\IEEEmembership{Member, IEEE},
Junhong Liu,~\IEEEmembership{Member, IEEE}%
\thanks{Junyu Lin, Wenjie Liu, Wentian Lu, and Jianhui Wang are with the
School of Mechanical and Electrical Engineering, Guangzhou University,
Guangzhou 510006, Guangdong, China
(e-mail: linjunyu20040910@gmail.com; lwj1993@gzhu.edu.cn;
luwentian@gzhu.edu.cn; jhwang@gzhu.edu.cn).}%
\thanks{Shunbo Lei is with the School of Science and Engineering,
The Chinese University of Hong Kong (Shenzhen), Longgang, Shenzhen,
Guangdong, China, and also with the Shenzhen Research Institute of
Big Data, Guangdong, China
(e-mail: leishunbo@cuhk.edu.cn).}%
\thanks{Junhong Liu is with the Rausser College of Natural Resources,
UC Berkeley, Berkeley, CA 94720, USA
(e-mail: junhongliu@berkeley.edu).}%
}

\begin{document}

\maketitle
\begin{abstract}
The rapid proliferation of data centers (DCs), driven by cloud computing and artificial intelligence (AI), has led to massive energy demand and carbon emissions, posing significant sustainability challenges. Carbon-aware optimization in geographically distributed data centers has been widely studied. Most existing approaches mainly focus on operational carbon emissions from server usage. However, existing literature often ignores workload-induced thermal stress, which accelerates nonlinear hardware degradation. This leads to more frequent server replacements and ultimately increases embodied carbon emissions\footnote{In this study, embodied carbon refers to the carbon emissions generated during server manufacturing and disposal, which are amortized into operational time slots based on hardware degradation.}. To address these limitations, we propose a comprehensive carbon life-cycle modeling framework for distributed data centers. Apart from operational carbon emissions, this work combines workload scheduling with a utilization-dependent exponential aging model to evaluate long-term carbon costs from server degradation. In order to solve the proposed optimization model in an online and privacy-preserving manner, an enhanced Lyapunov framework with time-varying queue shifting (TVQS) is first introduced to handle system uncertainties. Then, a zero-sum perturbation-based alternating direction method of multipliers (ZSP-ADMM) framework is developed to enable distributed coordination across geographically separated data centers
while protecting locally exchanged workload information. Simulation results demonstrate that the proposed approach achieves up to 13.0\% lower carbon emissions and 12.6\% lower operational costs compared with benchmarks.
\end{abstract}

\begin{IEEEkeywords}
Data center, carbon-aware, hardware degradation, online algorithm, distributed mechanism.
\end{IEEEkeywords}

\section{Introduction}

The rapid proliferation of data centers (DCs), driven by cloud computing and AI, has led to massive energy requirements and carbon emissions. DC power demand is projected to soar to nearly 2967~TWh by 2030, with its carbon footprint potentially contributing 8\% of global emissions \cite{katal2023energy, cao2022toward}. These trends raise significant concerns regarding electricity costs and environmental sustainability, making cost-efficient and carbon-aware operation an imperative research priority \cite{long2022review}.

To address these challenges, extensive research efforts have focused on real-time carbon-aware operational management in geographically distributed data centers \cite{cao2022toward}. Early strategies prioritized optimizing energy portfolios via Power Purchase Agreements and Renewable Energy Certificates \cite{ren2012carbon}. Subsequently, research pivoted toward operational flexibility, leveraging geographical load balancing and Lyapunov-based online scheduling to exploit spatio-temporal variations in carbon intensity and renewable energy availability \cite{zhou2013carbon, dou2017carbon, zhao2026lyapunov}. These concepts have reached industrial maturity through Google’s carbon-intelligent system using Virtual Capacity Curves \cite{radovanovic2022carbon}, with further refinements extending to dual-dimension task migration and edge-computing resource allocation \cite{yang2021carbon, gnibga2023latency}.

Existing thermal-aware hardware aging studies have shown that sustained high power density can accelerate physical wear in computing systems \cite{kumar2026advanced, 10450488}. Specifically, higher server utilization accelerates hardware aging, shortens server lifetime, and increases carbon emissions associated with hardware replacement \cite{11096406,9926126}. Field data from large-scale data centers indicate that regular hardware replacement is unavoidable due to inevitable component wear and failure \cite{vishwanath2010characterizing}. Such replacement requires continuous manufacturing of new devices and disposal of retired equipment. In particular, manufacturing high-performance components such as solid-state drives (SSDs) generates substantial embodied carbon emissions \cite{tannu2023dirty}. Although these manufacturing and disposal emissions account for over half of the total lifecycle carbon footprint in modern computing \cite{gupta2021chasing}, they are rarely integrated into dynamic data center scheduling. 

Most existing approaches, including carbon-aware workload scheduling methods \cite{lin2026carbon,ren2026grid,wang2025multi} and cloud-edge lifecycle management frameworks \cite{mekki2026energy}, focus primarily on energy efficiency or operational carbon emissions. They overlook heterogeneous degradation caused by varying server utilization levels. This oversight leads to underestimation of the actual lifecycle carbon footprint \cite{11673521,huang2025bim}. Although a recent study \cite{zhang2025carbon} considers both operational and embodied carbon, its embodied carbon accounting is primarily based on active server counts and accumulated operating time, without explicitly modeling utilization-dependent hardware degradation. Consequently, it may underestimate the embodied carbon emissions associated with accelerated hardware degradation under high utilization. Moreover, its offline scheduling method relies on future information about workload arrivals, renewable generation, and electricity prices, while its centralized coordination raises privacy concerns through information exchange among data centers.

Lyapunov optimization has been widely applied to online data center scheduling, including the joint coordination of computing workloads and battery energy \cite{huang2024online}, electricity procurement under renewable generation and price uncertainties \cite{gao2024online}, and cluster energy minimization under data redundancy constraints \cite{zhao2026lyapunov}. However, standard Lyapunov frameworks rely on static virtual queue shift parameters, preventing energy storage from adaptively responding to real-time fluctuations in renewable generation and electricity prices. On the other hand, alternating direction method of multipliers (ADMM)-based algorithms have been widely applied to distributed data center coordination, including computational-task coordination among multiple data center operators \cite{han2022coordinative} and joint workload and energy scheduling across regional peer-to-peer energy markets \cite{jin2025unlocking,chen2025spatial}. However, exchanging partial operational data in these ADMM schemes still exposes workload patterns and user privacy.

To address the aforementioned modeling and algorithmic limitations, this paper proposes a low-carbon online distributed operation framework for geographically distributed data centers. The main contributions are summarized as follows:

\textbf{1)} We develop a lifecycle carbon model for geo-distributed data centers that jointly accounts for operational carbon emissions and hardware-aging-induced embodied carbon emissions. The model links cluster-level server utilization to nonlinear lifetime degradation and calculates the resulting embodied carbon emissions. Simulation results show that the proposed framework reduces total carbon emissions by 13.0\% and operational costs by 12.6\% relative to the benchmark.

\textbf{2)} We formulate a long-term optimization problem that jointly minimizes economic costs and lifecycle carbon emissions, and develop a Lyapunov-based online method that transforms it into per-slot subproblems without requiring future system information. A time-varying queue shifting (TVQS) mechanism adjusts the battery virtual-queue shift according to real-time electricity prices and renewable generation. Simulation results show that the proposed mechanism enables the battery to adapt its charging and discharging decisions to variations in electricity prices and renewable availability.

\textbf{3)} We propose a zero-sum perturbation ADMM (ZSP-ADMM) framework
to enable distributed coordination across geographically separated
data centers without directly exposing their local workload migration
decisions. A structured perturbation mechanism is incorporated into
the ADMM information exchange, where coordinated zero-sum
perturbations preserve the global workload-balance structure while
obfuscating local migration signals. Simulation results demonstrate
that the proposed ZSP-ADMM achieves effective workload masking with
negligible computational overhead.

\section{Proposed DC Operation Framework}

\begin{figure}[h]
    \centering
    \includegraphics[width=\linewidth]{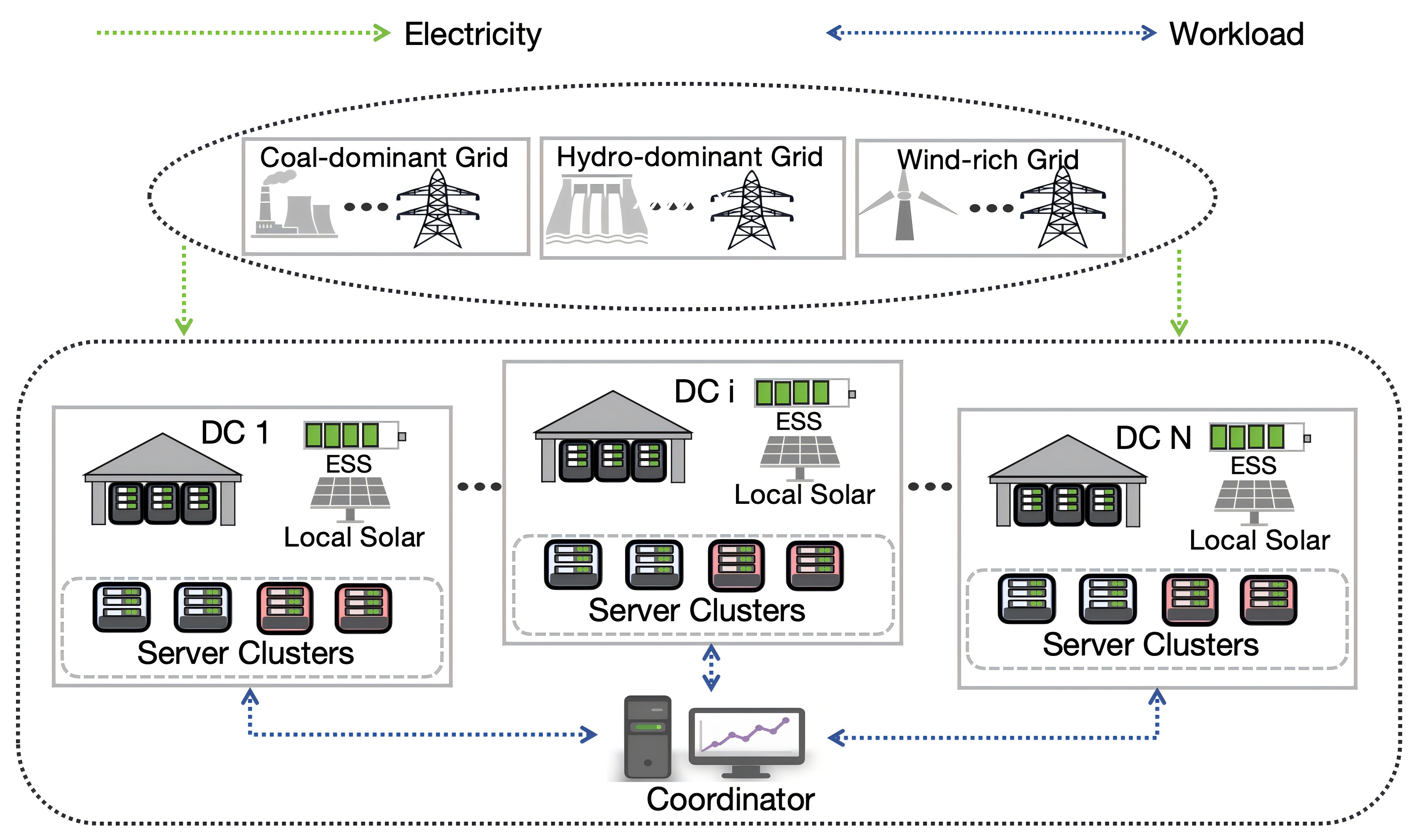}
    \caption{Structure of the data centers.}
    \label{fig_dc}
\end{figure}
We consider a geographically distributed DC network, as illustrated in Fig.~\ref{fig_dc}. Externally, each DC operates under distinct local electricity prices and grid carbon emission intensities depending on its regional power sources. Internally, each DC is equipped with local solar generation, energy storage systems (ESS), and heterogeneous server clusters that exhibit distinct processing rate, power profiles, and physical aging sensitivity factors.

Three coupled flows are modeled and jointly optimized. 1) Electricity flow: Each DC can utilize local renewable energy generation and an energy storage system (ESS), while purchasing electricity from the nearby utility grid and incurring both electricity cost and operational carbon emissions. 2) Workload flow: Interactive workloads can be migrated among DCs. 3) Degradation flow: Workload allocation among server clusters directly impacts hardware degradation, which dictates server replacement cycles and embodied carbon emissions.
\subsection*{A. Operational Cost Model}

\subsubsection*{1) Electricity Purchase Cost}

The electricity purchase cost of data center $i$ from the power grid at time slot $t$ is modeled as
\begin{equation}
   f^{g}_{i,t} = \lambda_{i,t}\, P_{g,i,t}\, \Delta t.
\end{equation}
Here, $P_{g,i,t}$ represents the active power drawn by data center $i$ from the grid at time $t$, $\lambda_{i,t}$ denotes the electricity price at the bus where DC $i$ is connected, and $\Delta t$ denotes the duration of each time slot.

\subsubsection*{2) Energy Storage System Cost}

Battery cycling leads to degradation and an associated cost. We model the degradation-related cost of the battery in DC $i$ at time slot $t$ as
\begin{equation}
    f^{b}_{i,t} = \psi_{deg}\big(P_{b,i,t}^c + P_{b,i,t}^d\big)\,\Delta t,
\end{equation}
where $P_{b,i,t}^c$ and $P_{b,i,t}^d$ denote the charging and discharging power of the battery, respectively, and $\psi_{deg}$ is a degradation coefficient that converts energy throughput into an equivalent monetary cost.

\subsubsection*{3) Transfer Cost}

Interactive workload migration across data centers incurs operational costs, such as bandwidth charges or routing overhead. We assume this cost is borne by the sending data center. For each DC $i$, the migration cost incurred at time slot $t$ is defined as:
\begin{equation}
    f^{\mathrm{trans}}_{i,t} = c^{\mathrm{trans}}_i [m^{I}_{i,t}]^+, \quad \forall i\in\mathcal{I}, \forall t,
\end{equation}
where $c^{\mathrm{trans}}_i$ is the unit transfer cost for DC $i$, and $[m^{I}_{i,t}]^+ = \max(m^{I}_{i,t}, 0)$ denotes the net outbound interactive workload migrated at time $t$.

\par\indent
\textit{4) Operational Carbon Cost:}\quad

Let $\alpha_C$ denote the carbon price per unit emission, and let $\rho^{oc}_{i,t}$ denote the carbon emission intensity of grid electricity at the location of DC $i$ and time $t$, measured in kg CO$_2$ per unit energy. The operational carbon cost incurred at time slot $t$ is modeled as
\begin{equation}
    f^{c,o}_{i,t}
    =
    \alpha_C\, \rho^{oc}_{i,t}\,
    P_{g,i,t}\,\Delta t,
    \qquad \forall i,t.
\end{equation}

\subsection*{B. Degradation-Informed Embodied Carbon Cost Model}

\begin{figure}[h]
    \centering
    \includegraphics[width=\linewidth]{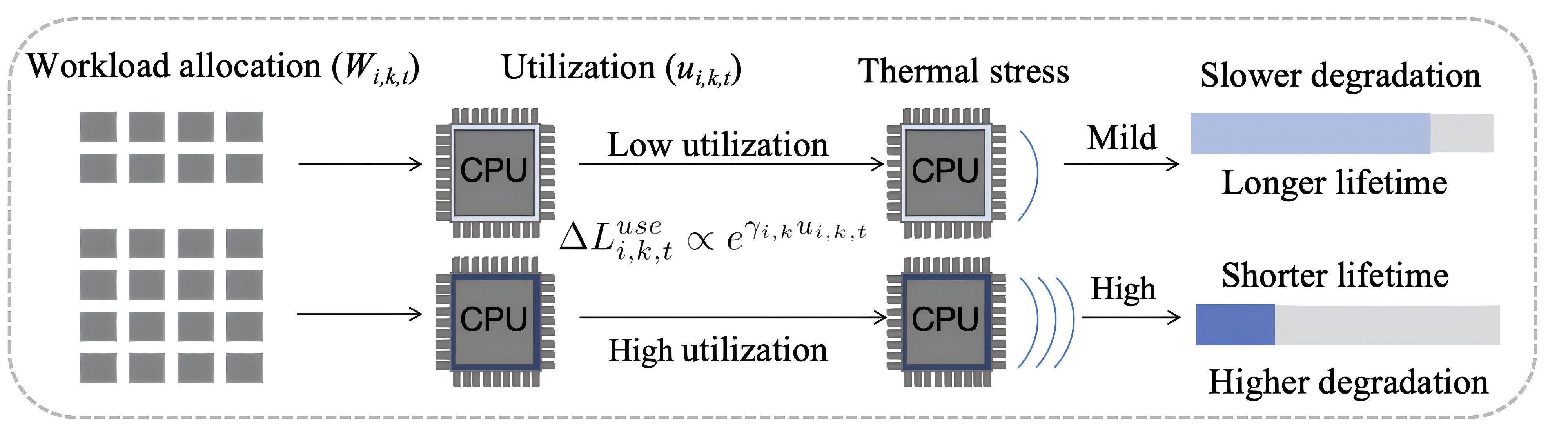}
    \caption{Illustration of utilization-induced server aging and lifetime consumption.}
    \label{fig_sever}
\end{figure}

Fig.~\ref{fig_sever} illustrates how workload allocation at the server-cluster level
is translated into server utilization and subsequently affects thermal stress
and lifetime consumption.

Conventional amortization models allocate embodied carbon uniformly over the expected hardware lifetime \cite{hewage2025aging}. However, physical hardware aging is highly load-dependent \cite{liu2019service}. High resource utilization elevates power consumption and processor temperature, thereby accelerating physical hardware aging \cite{hewage2025aging, dowling2025regulating}. This thermal stress shortens the operational lifespan of servers, necessitating more frequent hardware replacements and thus driving up the overall embodied carbon footprint. Accordingly, we develop a framework that links workload utilization to physical server aging and the embodied carbon cost induced by hardware replacement.

We define the resource utilization level $u_{i,k,t}\in[0,1]$ for
cluster $(i,k)$ at time slot $t$ as the ratio of the assigned workload
to the total available processing capacity:
\begin{equation}
    u_{i,k,t}
    \triangleq
    \frac{W_{i,k,t}}
    {N^{A}_{i,k,t}\,S^{\mathrm{rate}}_{i,k}},
    \qquad \forall i,k,t,
\end{equation}
where $S^{\mathrm{rate}}_{i,k}$ denotes the processing rate of a single server in cluster $(i,k)$, and $N^{A}_{i,k,t}$ represents the number of active servers. This definition captures the load-dependent nature of server aging, whereby a higher execution load results in a faster hardware aging rate \cite{liu2019service,leverich2014reconciling}.

\subsubsection*{1) Utilization-induced Aging Model}

To mathematically map the load-dependent physical wear to the macroscopic lifespan reduction, an exponential aging model is adopted. Let $L^{\mathrm{use}}_{i,k,t}$ represent the cumulative lifetime consumed by cluster $(i,k)$ up to time slot $t$. The lifetime dynamics evolve as:
\begin{equation}
    L^{\mathrm{use}}_{i,k,t+1}
    =
    L^{\mathrm{use}}_{i,k,t}
    +
    \underbrace{
    \eta_{i,k}\,
    \exp\!\big(\gamma_{i,k}\, u_{i,k,t}\big)\,
    \Delta t
    }_{\text{aging increment}},
\end{equation}
where $\eta_{i,k}>0$ is the intrinsic aging coefficient, and $\gamma_{i,k}>0$ denotes the aging sensitivity factor.

\subsubsection*{2) Embodied Carbon Accounting}

As server lifetime is consumed during operation, the manufacturing
embodied carbon is amortized in proportion to the corresponding lifetime
consumption. Therefore, sustained high utilization accelerates lifetime
consumption and consequently increases the embodied carbon attributed
to the current operation.

The fractional lifetime consumption during time slot $t$ is calculated as:
\begin{equation}
\Delta \ell_{i,k,t}
=
\frac{
\eta_{i,k}
\exp(\gamma_{i,k} u_{i,k,t})
\Delta t
}{
L^{\mathrm{life}}_{i,k}
},
\end{equation}
where $L^{\mathrm{life}}_{i,k}$ represents the nominal design lifetime under base operating conditions. The corresponding instantaneous embodied carbon expenditure is:
\begin{equation}
E^{\mathrm{emb}}_{i,k,t}
=
C^{M}_{i,k}\Delta \ell_{i,k,t},
\end{equation}
where $C^{M}_{i,k}$ denotes the manufacturing embodied carbon per server in cluster $(i,k)$.

\subsubsection*{3) Aging-Amortized Embodied Carbon Cost}

To account for the degradation-induced embodied carbon in the operational optimization loop, the embodied carbon emission is aggregated across all active servers and converted into an equivalent monetary cost via the carbon price. Specifically, the total embodied carbon cost for data center $i$ at time slot $t$ is obtained by multiplying the single-server embodied carbon $E^{\mathrm{emb}}_{i,k,t}$ by the active server count $N^A_{i,k,t}$ and the carbon price $\alpha_C$:
\begin{equation}
\begin{split}
    f^{c,m}_{i,t}
    &=
    \alpha_C
    \sum_{k \in \mathcal{K}_i}
    N^A_{i,k,t} E^{\mathrm{emb}}_{i,k,t} \\
    &=
    \alpha_C
    \sum_{k \in \mathcal{K}_i}
    N^A_{i,k,t}
    \left(
    C^{M}_{i,k}
    \frac{
    \eta_{i,k}
    \exp(\gamma_{i,k} u_{i,k,t})
    \Delta t
    }{
    L^{\mathrm{life}}_{i,k}
    }
    \right),
    \quad \forall i,t.
\end{split}
\end{equation}

\enlargethispage{1\baselineskip}

By combining the operational and embodied carbon, the total carbon-related financial objective for data center $i$ at time slot $t$ is established as:
\begin{equation}
    f^{c}_{i,t} = f^{c,o}_{i,t} + f^{c,m}_{i,t}, \qquad \forall i,t.
\end{equation}

\subsection*{C. Workload Migration Model}

We consider two workload types: interactive workloads (IWs) and batch workloads (BWs) \cite{ye2022joint}.
\begin{itemize}
    \item IWs: latency-critical tasks that must be completed within each time slot and can be migrated across data centers.
    \item BWs: delay-tolerant tasks processed locally due to high data transfer overhead.
\end{itemize}

\subsubsection*{1) Interactive Workload (IW) Modeling}

Let $w^{I,\mathrm{pre}}_{i,t}$ denote the local incoming IW demand at DC $i$ at time slot $t$, and let $m^{I}_{i,t}$ represent the net migrated IW volume from DC $i$. The IW processed at DC $i$ is formulated as:
\begin{equation}\label{eq:IW_local_balance}
w^I_{i,t} = w^{I,\mathrm{pre}}_{i,t} - m^{I}_{i,t}, \quad \forall i\in\mathcal{I}, \forall t,
\end{equation}
where $m^{I}_{i,t} > 0$ indicates net outbound workload migration (DC $i$ exports tasks), and $m^{I}_{i,t} < 0$ denotes net inbound workload absorption (DC $i$ imports tasks).

The net migrated workloads across all data centers must sum to zero at each time slot:
\begin{equation}\label{eq:workload_balance}
\sum_{i\in\mathcal{I}} m_{i,t}^I = 0, \quad \forall t.
\end{equation}

To ensure physical feasibility, the net migrated workload is bounded by the gateway bandwidth and local processing capacity limits:
\begin{equation}\label{eq:bandwidth_limit}
-\Delta m^{I,\mathrm{limit}}_{i,\mathrm{in}} \le m^{I}_{i,t} \le \min\left(w^{I,\mathrm{pre}}_{i,t}, \, \Delta m^{I,\mathrm{limit}}_{i,\mathrm{out}}\right), \quad \forall i\in\mathcal{I}, \forall t,
\end{equation}
where $\Delta m^{I,\mathrm{limit}}_{i,\mathrm{in}}$ and $\Delta m^{I,\mathrm{limit}}_{i,\mathrm{out}}$ denote the maximum allowable inbound and outbound migration capacity thresholds for DC $i$, respectively.

The workload processed locally at DC $i$ is further allocated to its clusters. Let $w^{I}_{i,k,t}$ denote the interactive workload processed by cluster $k$ in DC $i$, which satisfies:
\begin{equation}\label{eq:IW_cluster_alloc}
w^{I}_{i,t} = \sum_{k\in\mathcal{K}_i} w^{I}_{i,k,t}, \quad \forall i\in\mathcal{I}, \forall t.
\end{equation}
\subsubsection*{2) Batch Workload (BW) Modeling}

The BW queue evolves according to the standard discrete-time queueing dynamics:
\begin{equation}\label{eq:batch_queue}
    S^{D}_{i,t+1}
    =
    \max\!\Big\{
    S^{D}_{i,t}
    -
    w^{D}_{i,t},
    \,0
    \Big\}
    +
    A^{D}_{i,t},
    \qquad \forall i,t,
\end{equation}
where $S^{D}_{i,t}$ denotes the backlog of BW at DC $i$ at time slot $t$, and $A^{D}_{i,t}$ represents the newly arrived BW from local users during slot $t$. The total BW service $w^{D}_{i,t}$ provided by DC $i$ is formulated as
\begin{equation}\label{eq:BW_cluster_alloc}
    w^{D}_{i,t}
    =
    \sum_{k\in\mathcal{K}_i} w^{D}_{i,k,t},
    \qquad \forall i,t,
\end{equation}
where $w^{D}_{i,k,t}$ is the amount of BW processed by cluster $k$ in DC $i$.
Non-negativity constraints ensure feasibility:
\begin{equation}\label{eq:BW_non_negativity}
    S^{D}_{i,t} \ge 0,\quad
    w^{D}_{i,k,t} \ge 0,
    \qquad \forall i,k,t.
\end{equation}

Due to the delay-tolerant nature of batch workloads, queue stability
is adopted as the performance criterion for BW scheduling \cite{neely2010stochastic}. Accordingly,
the BW queue at each DC is required to satisfy the following stability condition:
\begin{equation}
\lim_{T\to\infty}
\frac{1}{T}
\sum_{t=0}^{T-1}
\mathbb{E}\!\left[S^D_{i,t}\right]
<\infty,
\qquad \forall i.
\label{eq:BW_strong_stability}
\end{equation}
\subsection*{D. Service Level Agreement (SLA) Constraint for Workload Processing}

The total workload assigned to cluster $(i,k)$ at time slot $t$ consists of
both interactive workload and delay-tolerant batch workload, given by
\begin{equation}\label{eq:total_workload_def}
    W_{i,k,t}
    =
    w^{I}_{i,k,t}
    +
    w^{D}_{i,k,t},
    \qquad \forall i,k,t.
\end{equation}

\noindent\textit{1) Delay Constraint}\quad

For cluster $(i,k)$ at time slot $t$, the workload execution delay is mainly
determined by the server utilization and available computing capacity. Using
a  M/M/1-type approximation based on aggregate service capacity, the
execution delay is given by
\begin{equation}\label{eq:SLA_delay_limit}
    t^{\mathrm{delay}}_{i,k,t}
    =
    \frac{1}
    {\big(1-u_{i,k,t}\big)\,S^{\mathrm{rate}}_{i,k}}
    \le
    \bar{t}^{\mathrm{delay}}_{i,k},
    \qquad \forall i,k,t.
\end{equation}
which provides a tractable delay approximation as commonly adopted in prior
work \cite{chen2023multi}. $\bar{t}^{\mathrm{delay}}_{i,k}$ denotes the
maximum allowable delay specified by the SLA.

\noindent\textit{2) Active Server Capacity Constraint}

The number of active servers in each cluster is constrained by the installed
capacity:
\begin{equation}\label{eq:active_server_limit}
    N^{\min}_{i,k} \le N^{A}_{i,k,t} \le N^{\max}_{i,k},
    \qquad \forall i,k,t,
\end{equation}
where $N^{\min}_{i,k}>0$ and $N^{\max}_{i,k}$ denote the minimum and maximum numbers of active servers in cluster $(i,k)$, respectively.
\subsection*{E. Computing Power Consumption and Thermal Dynamics Model}

\subsubsection*{1) Computing Power Consumption}
The power consumption of cluster $(i,k)$ at time $t$ is formulated as:
\begin{equation}\label{eq:IT_power_cluster}
    P^{\mathrm{IT}}_{i,k,t}
    = 
    P^{\mathrm{idle}}_{i,k} N^{A}_{i,k,t} 
    + 
    u_{i,k,t} N^{A}_{i,k,t} \big(P^{\mathrm{peak}}_{i,k} - P^{\mathrm{idle}}_{i,k}\big), \quad \forall i,k,t,
\end{equation}
where $P^{\mathrm{idle}}_{i,k}$ and $P^{\mathrm{peak}}_{i,k}$ denote the idle and peak power consumption of a single server, respectively. 

The total computing power consumption of data center $i$ is aggregated as:
\begin{equation}\label{eq:IT_power_aggregate}
    P^{\mathrm{IT}}_{i,t} = \sum_{k\in\mathcal{K}_i} P^{\mathrm{IT}}_{i,k,t}, \quad \forall i,t.
\end{equation}

\subsubsection*{2) Thermal Dynamics and Cooling}
The heat generated by server operation affects the indoor temperature evolution, which must be managed by the cooling system. Let $\tau_{i,t}$ denote the ambient temperature of data center $i$ at time $t$. The discrete-time thermal model is formulated as:
\begin{equation}\label{eq:thermal_dynamic}
    \tau_{i, t+1} = \tau_{i, t} + \kappa_i^B P^{\mathrm{IT}}_{i, t} - \kappa_i^C P^{\mathrm{cool}}_{i, t} - \beta_{i, t}, \quad \forall i,t,
\end{equation}
where $P^{\mathrm{cool}}_{i, t}$ is the cooling power. The parameters $\kappa_i^B$ and $\kappa_i^C$ represent the heat conversion coefficient and the cooling efficiency, respectively, while $\beta_{i, t}$ accounts for ambient heat dissipation. 

Constraint \eqref{eq:temp_limits} enforces the temperature limits for each data center:
\begin{equation}\label{eq:temp_limits}
    \underline{T}_i \le \tau_{i, t} \le \overline{T}_i, \quad \forall i,t.
\end{equation}
\subsection*{F. Power Management Constraints}

This subsection summarizes the constraints associated with power flows inside each DC. 

\noindent\textit{1) Electricity Purchase from the Grid}

The grid power drawn by each DC is bounded as
\begin{equation}\label{eq:grid_power_limit}
   0 \le P_{g,i,t} \le P_{g,i,t}^{\max},
\end{equation}
where $P_{g,i,t}^{\max}$ denotes the maximum grid power intake of data center $i$.

\noindent\textit{2) Battery Energy Storage System}

Charging and discharging power are limited by converter and battery ratings:
\begin{equation}\label{eq:battery_charge_limit}
    0 \le P_{b,i,t}^c \le P_{b,i,t}^{c,\max},
\end{equation}
\begin{equation}\label{eq:battery_discharge_limit} 
    0 \le P_{b,i,t}^d \le P_{b,i,t}^{d,\max},
\end{equation}
where $P_{b,i,t}^{c,\max}$ and $P_{b,i,t}^{d,\max}$ denote the maximum charging and discharging power.

The battery energy level evolves as
\begin{equation}\label{eq:battery_soc}
    E_{b,i,t+1}
    =
    k_i E_{b,i,t}
    + P_{b,i,t}^c\,\Delta t\,\eta_c
    - P_{b,i,t}^d\,\frac{\Delta t}{\eta_d},
\end{equation}
where $k_i \in (0,1]$ is the energy retention coefficient, and $\eta_c$, $\eta_d$ are charging/discharging efficiencies.

The stored energy is constrained by
\begin{equation}\label{eq:battery_capacity_limit}
   E_{b,i}^{\min}\le E_{b,i,t}\le E_{b,i}^{\max},
\qquad \forall i,t.
\end{equation}
and the inter-temporal variation is limited by
\begin{equation}
E_{b,i,t} - \Delta E_{b,i} \le E_{b,i,t+1} \le E_{b,i,t} + \Delta E_{b,i},
\end{equation}
where $\Delta E_{b,i}$ represents the ramp rate limit for battery storage.

\noindent\textit{3) DC Power Balance}

For each data center and time slot, the electrical supply must match demand:
\begin{equation}\label{eq:energy_balance}
    P_{g,i,t} + P_{b,i,t}^d + P_{r,i,t}
    =
    P^{\mathrm{IT}}_{i,t} + P^{\mathrm{cool}}_{i,t} + P_{b,i,t}^c,
    \qquad \forall i,t.
\end{equation}
Here, $P_{r,i,t}$ is the local renewable generation, and $P^{\mathrm{IT}}_{i,t}$ denotes the total computing power consumption of DC $i$.
\subsection*{G. Joint Cost and Carbon Optimization Problem}

Here, $i\in\mathcal{I}$, $k\in\mathcal{K}_i$, and
$t\in\mathcal{T}$ index the data centers, server clusters of DC $i$,
and time slots, respectively. Let \(\Phi_t\) define the control variables of all data centers at time slot \(t\).
\[
\Phi_t = \left\{
\begin{aligned}
    &m^{I}_{i,t},\, N^{A}_{i,k,t},\, w^{D}_{i,k,t},\, w^{I}_{i,k,t},\, P_{g,i,t}, \\
    &P^{\mathrm{cool}}_{i,t},\, P^{c}_{b,i,t},\, P^{d}_{b,i,t}
\end{aligned}
\right\}.
\]
The instantaneous operation cost $f(\Phi_t)$ incurred at time slot $t$ is defined as:
\begin{equation}
f(\Phi_t) \triangleq \sum_{i \in \mathcal{I}} \big( f^{g}_{i,t} + f^{b}_{i,t} + f^{\mathrm{trans}}_{i,t}+f^{c}_{i,t}  \big).
\end{equation}

By aggregating all the constraints presented in Sections~C--F, the joint cost and carbon optimization problem is formulated as follows:

{
\begin{flalign*}
&\textbf{P1:}\quad
\min_{\{\Phi_t\}_{t\in\mathcal T}}
\ \sum_{t\in\mathcal T} f(\Phi_t)
&&\\[0.5mm]
&\hspace{3.5em}\text{\rm s.t.}\quad
\eqref{eq:IW_local_balance}\text{--}\eqref{eq:BW_strong_stability},
\ \eqref{eq:total_workload_def}\text{--}\eqref{eq:energy_balance},
\quad \forall t\in\mathcal T.
&&
\end{flalign*}
}
\section{Solution Method}

Directly solving \textbf{P1} in a centralized offline manner is impractical due to two fundamental structural challenges:

First, problem \textbf{P1} exhibits strong inter-temporal coupling across scheduling horizons through physical state transitions, including \eqref{eq:batch_queue}, \eqref{eq:thermal_dynamic}, and \eqref{eq:battery_soc}. Offline optimization over the operating horizon requires prior knowledge of future renewable generation, electricity prices, and workload arrivals, which are uncertain. To address these temporal dependencies without future predictions, we design an online Lyapunov optimization framework to decouple the long-term problem into per-slot subproblems.

Second, the workload-balance constraint in \eqref{eq:workload_balance} couples migration decisions across data centers. A centralized solution requires sharing local operational data, raising privacy concerns. We therefore develop a zero-sum perturbation ADMM (ZSP-ADMM) algorithm that decomposes the problem into local subproblems. Structured zero-sum perturbations are applied to the exchanged migration signals, masking local decisions while preserving the global workload balance.

\subsection{Unified Online Optimization Framework}

To enable the real-time implementation, we reformulate the long-term
problem \textbf{P1} using Lyapunov optimization into a sequence of per-slot
deterministic optimization problems.

\subsubsection{Virtual Queue Modeling and Time-Varying Queue Shifting}
To handle inter-temporal constraints without future predictions, we introduce three queue-based state variables:
\begin{itemize}
    \item \textbf{Batch Workload Queue:} The backlog $S^{D}_{i,t}$ tracks delay-tolerant tasks, evolving according to $S^{D}_{i,t+1} = \max\{S^{D}_{i,t} - w^{D}_{i,t}, 0\} + A^{D}_{i,t}$.
    \item \textbf{Thermal Virtual Queue:} To enforce operational temperature limits, we define $\tilde{\tau}_{i,t} = \tau_{i,t} - T^{\mathrm{ref}}_i$, capturing the temperature deviation from the baseline $T^{\mathrm{ref}}_i$.
    \item \textbf{Battery Virtual Queue:} To manage battery energy bounds within the Lyapunov framework \cite{neely2010stochastic}, we define the shifted virtual queue $\tilde{E}_{i,t} = E_{b,i,t} + \varepsilon_{i,t}$, where $E_{b,i,t}$ represents the battery energy level, and $\varepsilon_{i,t}$ is a shift parameter determined by the time-varying queue shifting (TVQS) strategy below.

\end{itemize}

\textbf{Time-Varying Queue Shifting Mechanism:}
Standard Lyapunov methods use a fixed battery virtual-queue shift derived from offline worst-case bounds, which may lead to conservative battery scheduling under time-varying operating conditions. We therefore develop a time-varying queue shifting (TVQS) mechanism that adjusts the battery virtual-queue shift according to real-time renewable generation and electricity prices, thereby adapting the charging and discharging incentives to current operating conditions.

Specifically, the operational incentive factor and the corresponding virtual-queue shift are defined as: 
\allowdisplaybreaks

\begin{equation}
\theta_{i,t}
=
\min\left\{
1,\,
\beta_\theta
\left(
\frac{P^{\mathrm{r}}_{i,t}}
{P^{\mathrm{r,rated}}_i + P^{\mathrm{r}}_{i,t}}
+
\frac{\lambda_i^{\max}-\lambda_{i,t}}
{\lambda_i^{\max}}
\right)
\right\},
\end{equation}
\begin{equation}
\varepsilon_{i,t}=-E^{\max}_{b,i}\theta_{i,t}.
\end{equation}
Here, $\theta_{i,t}\in[0,1]$ is the operational incentive factor, and $\beta_\theta>0$ controls the overall strength of the queue-shifting mechanism. $P^{\mathrm{r}}_{i,t}$ and $P^{\mathrm{r,rated}}_i$ denote the current and rated renewable generation, respectively; $\lambda_{i,t}$ and $\lambda_i^{\max}$ denote the current electricity price and its maximum threshold, where $\lambda_{i,t}\leq\lambda_i^{\max}$; and $E^{\max}_{b,i}$ denotes the maximum battery capacity. Accordingly, $\varepsilon_{i,t}\in[-E^{\max}_{b,i},0]$.

The value of $\theta_{i,t}$ increases when renewable generation is abundant or electricity prices are low. Consequently, $\varepsilon_{i,t}$ becomes more negative, lowering the battery virtual-queue value and strengthening the charging incentive. Conversely, when renewable generation decreases or electricity prices rise, $\theta_{i,t}$ decreases and $\varepsilon_{i,t}$ becomes less negative, raising the battery virtual-queue value and weakening the charging incentive.

\subsubsection{Lyapunov Function Formulation and Drift Bounding}
Let $\boldsymbol{\Theta}_t = [S^{D}_{i,t}, \tilde{E}_{i,t}, \tilde{\tau}_{i,t}]_{i \in \mathcal{I}}$ denote the concatenated system state vector. The unified Lyapunov function is defined as:
\begin{equation}
    L(\boldsymbol{\Theta}_t) = \frac{1}{2} \sum_{i \in \mathcal{I}} \left( \alpha_i (S^{D}_{i,t})^2 + \delta_i \tilde{E}_{i,t}^2 + \nu_i \tilde{\tau}_{i,t}^2 \right),
\end{equation}
where $\alpha_i, \delta_i, \nu_i > 0$ are positive weighting parameters.

The one-slot conditional Lyapunov drift is defined as $\Delta(\boldsymbol{\Theta}_t) = \mathbb{E}[L(\boldsymbol{\Theta}_{t+1}) - L(\boldsymbol{\Theta}_t) \mid \boldsymbol{\Theta}_t]$. 
The upper bound of the one-slot conditional Lyapunov drift is given in the following lemma.

\begin{lemma}\label{lemma_drift_bound}
Under any feasible control action and system state realization, the conditional one-slot Lyapunov drift is upper bounded by:
\begingroup
\small

\setlength{\jot}{1pt}

\begin{equation}
\begin{aligned}
\Delta(\boldsymbol{\Theta}_t)
\leq {}&
C_0
+\sum_{i\in\mathcal{I}}
\Bigg\{
\alpha_i S^D_{i,t}
\big(A^D_{i,t}-w^D_{i,t}\big)
+\nu_i\tilde{\tau}_{i,t}
\Big(
\kappa_i^B P^{\mathrm{IT}}_{i,t}
\\[-1.2mm]
&
\quad
-\kappa_i^C P^{\mathrm{cool}}_{i,t}
\Big)
+\delta_i\omega_{i,t}
\Big[
k_i\tilde{E}_{i,t}
+\tfrac{1}{2}\omega_{i,t}
+(1-k_i)\varepsilon_{i,t}
\Big]
\Bigg\},
\end{aligned}
\end{equation}
\endgroup
where $\omega_{i,t} = P^c_{b,i,t}\Delta t\eta_c - P^d_{b,i,t}\Delta t/\eta_d$ denotes the net battery energy variation at time slot $t$, and $C_0$ is a finite positive constant bounding the remainder terms arising from the queue dynamics and the time-varying shift variation.
\end{lemma}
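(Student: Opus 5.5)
The plan is to bound the three components of $L(\boldsymbol{\Theta}_{t+1})-L(\boldsymbol{\Theta}_t)$ separately, each by the standard square-expansion argument of \cite{neely2010stochastic}. Every bounded quadratic remainder is absorbed into $C_0$, and only the terms linear in the current queue states are kept. Taking the conditional expectation given $\boldsymbol{\Theta}_t$ at the end preserves the inequality, because the right-hand side is evaluated at the realized per-slot decisions.

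\emph{Batch queue.} From \eqref{eq:batch_queue} and $(\max\{a-b,0\}+c)^2\le a^2+b^2+c^2+2a(c-b)$ for $a,b,c\ge 0$, we get
\[
\tfrac12\big[(S^D_{i,t+1})^2-(S^D_{i,t})^2\big]\le \tfrac12\big[(w^D_{i,t})^2+(A^D_{i,t})^2\big]+S^D_{i,t}(A^D_{i,t}-w^D_{i,t}).
\]
Since arrivals are bounded and $w^D_{i,t}\le\sum_k N^{\max}_{i,k}S^{\mathrm{rate}}_{i,k}$, the squared terms give a constant $\alpha_i B_S$.

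\emph{Thermal queue.} By \eqref{eq:thermal_dynamic}, $\tilde\tau_{i,t+1}=\tilde\tau_{i,t}+g_{i,t}$ with $g_{i,t}=\kappa_i^BP^{\mathrm{IT}}_{i,t}-\kappa_i^CP^{\mathrm{cool}}_{i,t}-\beta_{i,t}$. Then
\[
\tfrac12\big[\tilde\tau_{i,t+1}^2-\tilde\tau_{i,t}^2\big]=\tilde\tau_{i,t}g_{i,t}+\tfrac12 g_{i,t}^2 .
\]
The term $\tfrac12 g_{i,t}^2$ is bounded because $P^{\mathrm{IT}}$ is bounded via \eqref{eq:IT_power_cluster}, \eqref{eq:active_server_limit} and $u\le1$, and $P^{\mathrm{cool}}$ and $\beta_{i,t}$ are bounded. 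The term $-\tilde\tau_{i,t}\beta_{i,t}$ is uncontrollable. Because \eqref{eq:temp_limits} confines $\tilde\tau_{i,t}$ to a bounded interval, it is at most $\max|\tilde\tau|\,\beta^{\max}$ and goes into $C_0$. What remains is $\nu_i\tilde\tau_{i,t}(\kappa_i^BP^{\mathrm{IT}}_{i,t}-\kappa_i^CP^{\mathrm{cool}}_{i,t})$.

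\emph{Battery queue (main obstacle).} From \eqref{eq:battery_soc}, $E_{b,i,t+1}=k_iE_{b,i,t}+\omega_{i,t}$. Substituting $E_{b,i,t}=\tilde E_{i,t}-\varepsilon_{i,t}$ gives
\[
\tilde E_{i,t+1}=k_i\tilde E_{i,t}+(1-k_i)\varepsilon_{i,t}+\omega_{i,t}+\Delta\varepsilon_{i,t},
\qquad \Delta\varepsilon_{i,t}=\varepsilon_{i,t+1}-\varepsilon_{i,t}.
\]
Squaring, the terms containing $\omega_{i,t}$ are $\omega_{i,t}\big[k_i\tilde E_{i,t}+(1-k_i)\varepsilon_{i,t}\big]+\tfrac12\omega_{i,t}^2$, which is exactly the bracket in the lemma. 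The TVQS variation is the delicate part. Every remaining term, namely $\tfrac12(k_i^2-1)\tilde E_{i,t}^2$, the cross terms in $\varepsilon$ and $\Delta\varepsilon$, and $\omega_{i,t}\Delta\varepsilon_{i,t}$, must be bounded by a constant. The bound follows from three facts:
\begin{itemize}[leftmargin=*]
\item $\varepsilon_{i,t}\in[-E^{\max}_{b,i},0]$, so $|\Delta\varepsilon_{i,t}|\le E^{\max}_{b,i}$;
\item by \eqref{eq:battery_capacity_limit}, $\tilde E_{i,t}\in[E^{\min}_{b,i}-E^{\max}_{b,i},\,E^{\max}_{b,i}]$;
\item $|\omega_{i,t}|$ is bounded by the power limits \eqref{eq:battery_charge_limit}--\eqref{eq:battery_discharge_limit}.
\end{itemize}
Here $\tfrac12(k_i^2-1)\tilde E_{i,t}^2\le 0$ and can be dropped. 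The others are products of bounded quantities, bounded for instance by Cauchy--Schwarz.

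Summing the three components with weights $\alpha_i,\delta_i,\nu_i$ and collecting all constants into $C_0$ yields the stated bound. The statement holds "under any feasible action" precisely because the constraints of \textbf{P1} provide all the boundedness used above.
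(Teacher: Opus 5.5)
Your proposal is correct and follows essentially the same route as the paper's proof in Appendix~A. Both use a separate square expansion for each of the three queues, absorb $-\nu_i\tilde\tau_{i,t}\beta_{i,t}$ into the constant via the temperature limits, and write the exact shifted-battery transition with its $\Delta\varepsilon_{i,t}$ terms bounded using $\varepsilon_{i,t}\in[-E^{\max}_{b,i},0]$ and the capacity and power limits; the only differences are cosmetic (you drop the nonpositive $\tfrac12(k_i^2-1)\tilde E_{i,t}^2$ term instead of bounding it, and use a sum rather than a max in the batch-queue constant).
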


\par\emph{Proof.}
The detailed derivation is provided in Appendix~\ref{appendix_A}.

\subsubsection{Per-slot Online Optimization Problem}
To minimize the operational cost while stabilizing all queues, we
adopt the drift-plus-penalty framework by incorporating the instantaneous
operational cost $f(\Phi_t)$ alongside the drift bound established in
Lemma~\ref{lemma_drift_bound}. Let $V>0$ denote the Lyapunov control
parameter that balances queue stability and instantaneous operational
cost. By taking the upper bound of
$\Delta(\boldsymbol{\Theta}_t)
+V\mathbb{E}[f(\Phi_t)\mid\boldsymbol{\Theta}_t]$
and neglecting terms independent of the control actions, the per-slot
deterministic optimization problem at time $t$ is formulated as:
\par
{\small
\setlength{\jot}{1pt}
\begin{equation}
\label{eq_p2_obj}
\begin{gathered}
\begin{aligned}
\textbf{P2:}\quad
\min_{\Phi_t}\quad
&
Vf(\Phi_t)
+\sum_{i\in\mathcal I}
\Big[
\delta_i\omega_{i,t}
\big(k_i\tilde E_{i,t}
+\tfrac12\omega_{i,t}
+(1-k_i)\varepsilon_{i,t}\big)
\\[-1.2mm]
&
\qquad
-\alpha_i S^D_{i,t}w^D_{i,t}
+\nu_i\tilde\tau_{i,t}
\big(\kappa_i^B P^{\mathrm{IT}}_{i,t}
-\kappa_i^C P^{\mathrm{cool}}_{i,t}\big)
\Big]
\end{aligned}
\\[0.5mm]
\hspace{-4.5em}
\begin{aligned}
\text{\rm s.t.}\quad &
\eqref{eq:IW_local_balance}\text{--}\eqref{eq:bandwidth_limit},
\eqref{eq:IW_cluster_alloc},
\eqref{eq:BW_cluster_alloc}\text{--}\eqref{eq:BW_non_negativity},
\eqref{eq:total_workload_def}\text{--}\eqref{eq:active_server_limit},
\\[-0.3mm]
&
\eqref{eq:IT_power_cluster}\text{--}\eqref{eq:IT_power_aggregate},
\eqref{eq:grid_power_limit}\text{--}\eqref{eq:battery_discharge_limit},
\eqref{eq:battery_capacity_limit},
\eqref{eq:energy_balance}.
\end{aligned}
\end{gathered}
\end{equation}
}
\subsection{Zero-Sum Perturbation ADMM Framework}

In problem \textbf{P2}, workload migration \eqref{eq:workload_balance} requires coordination among data centers, which exchange their migration volumes $m_{i,t}^I$ to maintain the global workload balance. However, sharing exact migration decisions exposes data centers to operational privacy risks.

To solve \textbf{P2} in a privacy-preserving manner, we develop a zero-sum perturbation ADMM (ZSP-ADMM) scheme that masks the exchanged migration volumes while preserving the global workload balance. Each data center adds a perturbation to its migration volume, and the perturbations are coordinated to sum to zero, leaving the total migration volume unchanged.

Building on this zero-sum property, we reformulate the global
workload balance constraint \eqref{eq:workload_balance} as consensus
constraints on the perturbed migration variables. To determine the
required perturbation magnitude, we first express the coupled migration
decisions as
$\mathbf{m}_t^I=[m_{1,t}^I,\dots,m_{|\mathcal{I}|,t}^I]^T$.
For two adjacent operational states that differ in the workload
information of one data center, the $L_1$-sensitivity of the released
migration decision is defined as

\begingroup
\begin{equation}
\Delta_m
=
\sup_{\mathbf{m}\sim\mathbf{m}'}
\left\lVert \mathbf{m}-\mathbf{m}' \right\rVert_1 .
\end{equation}
\endgroup

The physical migration limits in
\eqref{eq:bandwidth_limit} provide an upper bound on this sensitivity.
Specifically, the maximum admissible change associated with any single
data center is

\begingroup
\[
M_{\max}
=
\max_{i\in\mathcal{I}}
\left(
\Delta m_{i,\mathrm{in}}^{I,\mathrm{limit}}
+
\Delta m_{i,\mathrm{out}}^{I,\mathrm{limit}}
\right).
\]
\endgroup

Therefore, $\Delta_m\leq M_{\max}$, and we conservatively set
$\Delta_m=M_{\max}$ in the perturbation mechanism.

To prevent adversaries from inferring local workload profiles through
shared consensus information, we introduce a structured perturbation
mechanism inspired by \cite{lei2024decentralized}. Let $\epsilon>0$
denote the perturbation-control parameter, which regulates the scale of
the injected noise. At each time slot $t$, prior to the ZSP-ADMM iterations,
each data center $i$ first samples a preliminary perturbation
$\hat{\xi}_{i,t}$ from a zero-mean Laplace distribution with scale
parameter $\Delta_m/\epsilon$, i.e.,
$\hat{\xi}_{i,t}\sim\mathrm{Lap}(\Delta_m/\epsilon)$, whose probability
density function is
$p(\hat{\xi})=\frac{\epsilon}{2\Delta_m}
\exp\left(-\frac{\epsilon|\hat{\xi}|}{\Delta_m}\right)$.

The preliminary perturbations are then coordinated to construct the
zero-sum perturbation
\begin{equation}
\xi_{i,t}
=
\hat{\xi}_{i,t}
-
\frac{1}{|\mathcal{I}|}
\sum_{j\in\mathcal{I}}\hat{\xi}_{j,t},
\end{equation}
which satisfies
$\sum_{i\in\mathcal{I}}\xi_{i,t}=0$.
A fixed positive scaling factor $\zeta$ is used to control the
perturbation amplitude. The obfuscated migration variable is formulated as
\begin{equation}
\tilde{m}_{i,t}^{I}
=
m_{i,t}^{I}
+
\zeta\xi_{i,t},
\label{eq:obfuscated_migration}
\end{equation}
where $\zeta>0$ is a fixed perturbation scaling factor.

Since the coordinated perturbations satisfy
$\sum_{i\in\mathcal I}\xi_{i,t}=0$, the obfuscated migration variables
preserve the original global workload balance, i.e.,
\[
\sum_{i\in\mathcal I}\tilde{m}_{i,t}^{I}
=
\sum_{i\in\mathcal I}m_{i,t}^{I}.
\]
Therefore, the original global migration constraint
\eqref{eq:workload_balance} can be equivalently enforced through the
following consensus constraints on the obfuscated variables:
\begin{equation}\label{eq:node_admm_consensus}
\tilde{m}_{i,t}^I = z_{i,t}^I, \quad \forall i \in \mathcal{I}, \quad \mathbf{z}_t \in \mathcal{Z},
\end{equation}
where $z_{i,t}^I$ is the auxiliary consensus variable for data center $i$ at time slot $t$, and $\mathbf{z}_t = [z_{1,t}^I, \dots, z_{|\mathcal{I}|,t}^I]^T$.
$\mathcal{Z}\triangleq
\{\mathbf{z}_t\in\mathbb{R}^{|\mathcal{I}|}:
\sum_{i\in\mathcal{I}}z_{i,t}^{I}=0\}$
denotes the global workload-balance consensus set.

Replacing the global migration constraint~\eqref{eq:workload_balance} with \eqref{eq:obfuscated_migration} and \eqref{eq:node_admm_consensus} yields the following equivalent formulation of \textbf{P2}, denoted as \textbf{P3}:

{\small
\setlength{\jot}{1pt}
\begin{equation}
\label{eq:p3}
\begin{gathered}
\begin{aligned}
\textbf{P3:}\quad
\min_{\Phi_t,\mathbf z_t}\quad
&
Vf(\Phi_t)
+\sum_{i\in\mathcal I}
\Big[
\delta_i\omega_{i,t}
\big(k_i\tilde E_{i,t}
+\tfrac12\omega_{i,t}
+(1-k_i)\varepsilon_{i,t}\big)
\\[-1.2mm]
&
\qquad
-\alpha_i S^D_{i,t}w^D_{i,t}
+\nu_i\tilde\tau_{i,t}
\big(\kappa_i^B P^{\mathrm{IT}}_{i,t}
-\kappa_i^C P^{\mathrm{cool}}_{i,t}\big)
\Big]
\end{aligned}
\\[0.5mm]
\begin{aligned}
\text{\rm s.t.}\quad &
\eqref{eq:IW_local_balance},
\eqref{eq:bandwidth_limit},
\eqref{eq:IW_cluster_alloc},
\eqref{eq:BW_cluster_alloc}\text{--}\eqref{eq:BW_non_negativity},
\eqref{eq:total_workload_def}\text{--}\eqref{eq:active_server_limit},
\\[-0.3mm]
&
\eqref{eq:IT_power_cluster}\text{--}\eqref{eq:IT_power_aggregate},
\eqref{eq:grid_power_limit}\text{--}\eqref{eq:battery_discharge_limit},
\eqref{eq:battery_capacity_limit},
\eqref{eq:energy_balance},
\eqref{eq:obfuscated_migration},
\eqref{eq:node_admm_consensus}.
\end{aligned}
\end{gathered}
\end{equation}
}
Before deriving the ZSP-ADMM updates, we first establish the convexity of \textbf{P3} and its equivalence to \textbf{P2} under fixed zero-sum perturbations, as stated in the following two lemmas.

\begin{lemma}
For any fixed realization of the zero-sum perturbations, \textbf{P3} is a convex optimization problem.
\end{lemma}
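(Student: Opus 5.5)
The plan is to check convexity term by term: show that the objective of \textbf{P3} is a sum of convex functions of the decision variables $(\Phi_t,\mathbf z_t)$, and that every constraint is either affine or defines a convex set. Throughout, the queue states $S^D_{i,t},\tilde E_{i,t},\tilde\tau_{i,t}$, the shift $\varepsilon_{i,t}$, the prices, the renewable output and the perturbation realization $\xi_{i,t}$ are fixed parameters at slot $t$. As is implicit in the continuous formulation, I treat $N^A_{i,k,t}$ as a continuous variable on $[N^{\min}_{i,k},N^{\max}_{i,k}]$ with $N^{\min}_{i,k}>0$; integrality would destroy convexity, so I would state this relaxation explicitly.

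First I eliminate the utilization $u_{i,k,t}$ by substituting $u_{i,k,t}=W_{i,k,t}/(N^A_{i,k,t}S^{\mathrm{rate}}_{i,k})$ into every expression where it appears.

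\begin{itemize}
\item \textbf{IT power.} Equation \eqref{eq:IT_power_cluster} becomes $P^{\mathrm{IT}}_{i,k,t}=P^{\mathrm{idle}}_{i,k}N^A_{i,k,t}+W_{i,k,t}(P^{\mathrm{peak}}_{i,k}-P^{\mathrm{idle}}_{i,k})/S^{\mathrm{rate}}_{i,k}$. This is affine, because the product $uN^A$ cancels.
\item \textbf{Utilization range.} The condition $u_{i,k,t}\in[0,1]$ becomes $0\le W_{i,k,t}\le N^A_{i,k,t}S^{\mathrm{rate}}_{i,k}$, which is linear.
\item \textbf{Delay constraint.} Since $1-u>0$ is required, \eqref{eq:SLA_delay_limit} is equivalent to $u_{i,k,t}\le 1-1/(S^{\mathrm{rate}}_{i,k}\bar t^{\mathrm{delay}}_{i,k})$. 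This is the linear inequality $W_{i,k,t}\le (S^{\mathrm{rate}}_{i,k}-1/\bar t^{\mathrm{delay}}_{i,k})N^A_{i,k,t}$.
\item \textbf{Remaining constraints.} All of the following are affine: the workload definitions \eqref{eq:IW_local_balance}, \eqref{eq:IW_cluster_alloc}, \eqref{eq:BW_cluster_alloc}, \eqref{eq:total_workload_def}; the box constraints \eqref{eq:bandwidth_limit}, \eqref{eq:BW_non_negativity}, \eqref{eq:active_server_limit}, \eqref{eq:grid_power_limit}--\eqref{eq:battery_discharge_limit}; the power balance \eqref{eq:energy_balance}; and the battery capacity bound \eqref{eq:battery_capacity_limit}. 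For the last one, $E_{b,i,t+1}$ is affine in $(P^c_{b,i,t},P^d_{b,i,t})$ by \eqref{eq:battery_soc}, with $E_{b,i,t}$ given.
\item \textbf{Perturbation and consensus.} For a fixed realization of $\xi_{i,t}$, \eqref{eq:obfuscated_migration} is an affine equality in $m^I_{i,t}$. In \eqref{eq:node_admm_consensus}, the set $\mathcal Z$ is a linear subspace. Hence the feasible region is convex, being an intersection of convex sets.
\end{itemize}

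Next I treat the objective piece by piece.

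\begin{itemize}
\item \textbf{Linear terms.} The terms $f^g_{i,t}$, $f^b_{i,t}$ and $f^{c,o}_{i,t}$ are linear, since $\lambda_{i,t},\psi_{deg},\alpha_C,\rho^{oc}_{i,t}\ge0$ are given. The drift terms $-\alpha_iS^D_{i,t}w^D_{i,t}$ and $\nu_i\tilde\tau_{i,t}(\kappa^B_iP^{\mathrm{IT}}_{i,t}-\kappa^C_iP^{\mathrm{cool}}_{i,t})$ are also linear. Their sign is irrelevant, because $P^{\mathrm{IT}}_{i,t}$ is affine.
\item \textbf{Transfer cost.} $f^{\mathrm{trans}}_{i,t}=c^{\mathrm{trans}}_i\max(m^I_{i,t},0)$ is convex, as a pointwise maximum of affine functions scaled by $c^{\mathrm{trans}}_i\ge0$.
\item \textbf{Battery drift term.} The term $\delta_i\omega_{i,t}\big(k_i\tilde E_{i,t}+\tfrac12\omega_{i,t}+(1-k_i)\varepsilon_{i,t}\big)$ is a convex quadratic $\tfrac{\delta_i}{2}\omega_{i,t}^2$ plus a linear term in $\omega_{i,t}$. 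Since $\omega_{i,t}$ is affine in $(P^c_{b,i,t},P^d_{b,i,t})$ and $\delta_i>0$, this term is convex.
\item \textbf{Scaling.} $V>0$ preserves convexity.
\end{itemize}

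The main obstacle is the embodied-carbon term. After substitution, each summand of $f^{c,m}_{i,t}$ is, up to a positive constant $c_{i,k}=\alpha_CC^M_{i,k}\eta_{i,k}\Delta t/L^{\mathrm{life}}_{i,k}$,
\[
g(W,N)=N\exp\!\Big(\frac{\gamma_{i,k}W}{S^{\mathrm{rate}}_{i,k}N}\Big).
\]
This is not jointly convex by inspection. I would recognize it as the perspective $N\,h(W/N)$ of the convex function $h(x)=\exp(\gamma_{i,k}x/S^{\mathrm{rate}}_{i,k})$ on $N>0$, and the perspective of a convex function is jointly convex.

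If a self-contained check is preferred, I would compute the Hessian directly. With $a=\gamma_{i,k}/S^{\mathrm{rate}}_{i,k}$ and $x=W/N$, the Hessian of $g$ equals $\frac{a^2e^{ax}}{N}\begin{pmatrix}1&-x\\-x&x^2\end{pmatrix}$. This matrix is positive semidefinite, with determinant zero and nonnegative trace. Finally, $W_{i,k,t}=w^I_{i,k,t}+w^D_{i,k,t}$ is affine in the decision variables, so composing $g$ with this affine map preserves convexity.

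Summing convex terms gives a convex objective over a convex feasible set, which proves the lemma for every fixed realization of the zero-sum perturbations.
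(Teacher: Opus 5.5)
Your proposal is correct and follows essentially the same route as the paper's proof. Like the paper, you substitute $u_{i,k,t}=W_{i,k,t}/(N^A_{i,k,t}S^{\mathrm{rate}}_{i,k})$ to make the delay, utilization and IT-power relations affine, treat $N^A_{i,k,t}$ as continuous with $N^{\min}_{i,k}>0$, and identify the embodied-carbon term as the perspective of an exponential, with the remaining objective terms affine, piecewise-linear convex or convex quadratic; your explicit Hessian check is a valid extra that the paper omits.
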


\par\emph{Proof.}
The detailed proof is provided in Appendix~\ref{appendix_B}.

\begin{lemma}
For any fixed zero-sum perturbations, \textbf{P3} and \textbf{P2} have the same feasible physical decisions and optimal objective value.
\end{lemma}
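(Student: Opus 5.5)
The plan is to build a bijection between the two feasible sets and show that it leaves the objective unchanged. Throughout, fix a realization $\{\xi_{i,t}\}_{i\in\mathcal I}$ with $\sum_{i\in\mathcal I}\xi_{i,t}=0$. The two problems share all local constraints, namely \eqref{eq:IW_local_balance}, \eqref{eq:bandwidth_limit}, \eqref{eq:IW_cluster_alloc}, the BW, SLA, power and battery constraints. They also share the objective, since the objective of \textbf{P3} does not depend on $\mathbf z_t$ or $\tilde m^I_{i,t}$. The only difference is that \textbf{P2} imposes \eqref{eq:workload_balance}, while \textbf{P3} imposes \eqref{eq:obfuscated_migration} together with \eqref{eq:node_admm_consensus}. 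It therefore suffices to show that the projection $(\Phi_t,\mathbf z_t)\mapsto\Phi_t$ maps the feasible set of \textbf{P3} onto the feasible set of \textbf{P2}.

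\emph{Step 1 (\textbf{P3} $\Rightarrow$ \textbf{P2}).} Take any feasible $(\Phi_t,\mathbf z_t)$ of \textbf{P3}. Constraints \eqref{eq:obfuscated_migration} and \eqref{eq:node_admm_consensus} give $m^I_{i,t}+\zeta\xi_{i,t}=z^I_{i,t}$ for every $i$. Summing over $i$ and using $\mathbf z_t\in\mathcal Z$ together with the zero-sum property yields
\[
\sum_{i\in\mathcal I} m^I_{i,t}=\sum_{i\in\mathcal I} z^I_{i,t}-\zeta\sum_{i\in\mathcal I}\xi_{i,t}=0 .
\]
This is exactly \eqref{eq:workload_balance}, so $\Phi_t$ is feasible for \textbf{P2}.

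\emph{Step 2 (\textbf{P2} $\Rightarrow$ \textbf{P3}).} Take any feasible $\Phi_t$ of \textbf{P2}. Define $\tilde m^I_{i,t}$ by \eqref{eq:obfuscated_migration} and set $z^I_{i,t}=\tilde m^I_{i,t}$. Then $\sum_{i\in\mathcal I} z^I_{i,t}=\sum_{i\in\mathcal I} m^I_{i,t}+\zeta\sum_{i\in\mathcal I}\xi_{i,t}=0$, so $\mathbf z_t\in\mathcal Z$. All constraints of \textbf{P3} hold, and the lifted point has the same physical decision $\Phi_t$.

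\emph{Step 3 (objective values).} The feasible physical decisions coincide, and the objective depends only on $\Phi_t$ (it uses the true $m^I_{i,t}$ in $f^{\mathrm{trans}}_{i,t}$, not the obfuscated values). Hence the infima over the two feasible sets are equal. Optimal solutions also correspond: an optimizer of \textbf{P2}, lifted as in Step 2, is optimal for \textbf{P3}, and conversely the projection of an optimizer of \textbf{P3} is optimal for \textbf{P2}.

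The only delicate point is notational. One must check that the perturbation enters only the coupling constraint and not the bounds \eqref{eq:bandwidth_limit}, which are imposed on the true $m^I_{i,t}$. One must also check that $\mathbf z_t$ appears nowhere else, so that it is uniquely determined by $\Phi_t$ and the lift in Step 2 is well defined. Once these are confirmed, the argument reduces to the summation identity in Step 1.
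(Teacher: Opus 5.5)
Your proposal is correct and follows essentially the same argument as the paper. You lift a \textbf{P2}-feasible $\Phi_t$ by setting $\tilde m^I_{i,t}=z^I_{i,t}=m^I_{i,t}+\zeta\xi_{i,t}$, recover $\sum_{i\in\mathcal I} m^I_{i,t}=0$ from a \textbf{P3}-feasible point by summing and using $\sum_{i\in\mathcal I}\xi_{i,t}=0$, and conclude from the objective being independent of $\mathbf z_t$. You also note that $\mathbf z_t$ is fully determined by $\Phi_t$, so the correspondence is a genuine bijection, and that the bandwidth bounds act on the true $m^I_{i,t}$; both points are correct and left implicit in the paper.
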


\par\emph{Proof.}
The detailed proof is provided in Appendix~\ref{appendix_C}.

Given the convexity of \textbf{P3}, the detailed update procedure of the proposed ZSP-ADMM is presented below. We begin by constructing the augmented Lagrangian of \textbf{P3}. Let $\mu_{i,t}^I$ denote the dual multiplier associated with the consensus constraint \eqref{eq:node_admm_consensus}, and let $\rho > 0$ denote the penalty parameter. The augmented Lagrangian function is formulated as:
\begingroup
\small
\begin{align}
\label{eq:aug_lagrangian_node}
\mathcal{L}_{\rho}
={}&
\sum_{i\in\mathcal{I}}
\Big[
Vf_i(\Phi_{i,t})
-\alpha_i S^{D}_{i,t}w^{D}_{i,t}
+\nu_i\tilde{\tau}_{i,t}
\big(
\kappa_i^B P^{\mathrm{IT}}_{i,t}
-\kappa_i^C P^{\mathrm{cool}}_{i,t}
\big)
\notag\\[-1.5mm]
&\quad
+\delta_i
\big(
k_i\tilde{E}_{i,t}\omega_{i,t}
+\tfrac{1}{2}\omega_{i,t}^{2}
+(1-k_i)\varepsilon_{i,t}\omega_{i,t}
\big)
\notag\\[-1.5mm]
&\quad
+\mu_{i,t}^{I}
\big(
\tilde{m}_{i,t}^{I}-z_{i,t}^{I}
\big)
+\tfrac{\rho}{2}
\big(
\tilde{m}_{i,t}^{I}-z_{i,t}^{I}
\big)^2
\Big].
\end{align}
\endgroup

In the proposed ZSP-ADMM algorithm, at iteration $n+1$, the decision variables are updated sequentially in three steps.

\textit{1) Parallel Local Primal Update:}
Each data center $i$ independently solves its local subproblem over $\Phi_{i,t}$. The objective minimizes the local augmented Lagrangian decoupled from the consensus constraint:
\begin{align}\label{eq:local_subproblem_p3}
&\Phi^{(n+1)}_{i,t}
= \arg\min_{\Phi_{i,t}} \Big\{ V f_i(\Phi_{i,t}) - \alpha_i S^{D}_{i,t} w^{D}_{i,t} \notag \\
&\qquad\qquad\qquad\quad + \nu_i \tilde{\tau}_{i,t} \big( \kappa_i^B P^{\mathrm{IT}}_{i, t} - \kappa_i^C P^{\mathrm{cool}}_{i, t} \big) + \Omega_{i,t} \Big\} \\
&\text{s.t.} \quad \eqref{eq:IW_local_balance}, \, \eqref{eq:bandwidth_limit}, \, \eqref{eq:IW_cluster_alloc}, \, \eqref{eq:BW_cluster_alloc}\text{--}\eqref{eq:BW_non_negativity}, \, \eqref{eq:total_workload_def}\text{--}\eqref{eq:active_server_limit}, \, \eqref{eq:IT_power_cluster}\text{--}\eqref{eq:IT_power_aggregate}, \notag \\
&\phantom{\text{s.t.} \quad} \eqref{eq:temp_limits}, \, \eqref{eq:grid_power_limit}\text{--}\eqref{eq:battery_discharge_limit}, \, \eqref{eq:battery_capacity_limit}, \, \eqref{eq:energy_balance}, \, \eqref{eq:obfuscated_migration}, \notag
\end{align}
where the decoupling term $\Omega_{i,t}$ collects the local Lyapunov drift and ADMM penalty components:
\begin{align}\label{eq:omega_def}
\Omega_{i,t} &= \delta_i \Big( k_i\tilde{E}_{i,t}\omega_{i,t} + \frac{1}{2}\omega_{i,t}^2 + (1-k_i)\varepsilon_{i,t}\omega_{i,t} \Big) \notag\\
&\quad + \mu_{i,t}^{I,(n)} \left( \tilde{m}_{i,t}^I - z_{i,t}^{I,(n)} \right) + \frac{\rho}{2} \Big( \tilde{m}_{i,t}^I - z_{i,t}^{I,(n)} \Big)^2.
\end{align}
Since the perspective formulation preserves convexity, the local subproblem remains convex and can be solved to global optimality.

\textit{2) Centralized Global Consensus Update:}
The coordinator collects $\tilde{m}_{i,t}^{I,(n+1)}$ from all DCs and updates the global consensus variables $z_{i,t}^I$ by solving the following constrained subproblem:
\begin{align}\label{eq:global_z_update}
\mathbf{z}^{(n+1)}_t &= \arg\min_{\mathbf{z}_t} \sum_{i\in\mathcal{I}} \left[ -\mu_{i,t}^{I,(n)} z_{i,t}^I + \frac{\rho}{2} \Big( \tilde{m}_{i,t}^{I,(n+1)} - z_{i,t}^I \Big)^2 \right] \notag \\
\text{s.t.}\ \ \ & \mathbf{z}_t \in \mathcal{Z}.
\end{align}

\textit{3) Dual Multiplier Update:}
Each data center $i$ updates its dual multiplier locally:
\begin{equation}\label{eq:dual_update_node}
\mu^{I,(n+1)}_{i,t} = \mu^{I,(n)}_{i,t} + \rho \left( \tilde{m}^{I,(n+1)}_{i,t} - z^{I,(n+1)}_{i,t} \right), \quad \forall i \in \mathcal{I}.
\end{equation}

\textit{4) Stopping Criterion:}
The algorithm terminates when the primal residual satisfies
\begin{equation}
\left\|r^{(n)}\right\|_2^2
=
\sum_{i\in\mathcal{I}}
\left(
\tilde{m}^{I,(n)}_{i,t}
-
z^{I,(n)}_{i,t}
\right)^2
\le
\left(\epsilon^{\mathrm{pri}}\right)^2.
\end{equation}

The overall online distributed scheduling procedure based on
ZSP-ADMM is summarized in Algorithm~\ref{alg:distributed_admm}.

\begin{algorithm}[t]
\caption{Online ZSP-ADMM Algorithm}
\label{alg:distributed_admm}

\begingroup
\footnotesize
\linespread{0.92}\selectfont

\begin{algorithmic}[1]
\setlength{\itemsep}{0pt}
\setlength{\parskip}{0pt}
\setlength{\parsep}{0pt}
\setlength{\topsep}{2pt}

\STATE Initialize $S^{D}_{i,0}$, $E_{b,i,0}$, and $\tau_{i,0}$ for all
$i\in\mathcal{I}$.

\FOR{each time slot $t=0,1,\ldots,T-1$}

\STATE Observe the current workload arrivals, renewable generation,
electricity price, carbon intensity, and system states.

\STATE Compute the operational incentive factor $\theta_{i,t}$ and
update the time-varying shift parameter $\varepsilon_{i,t}$ according
to the TVQS mechanism for all $i\in\mathcal{I}$.

\STATE Pre-sample the preliminary perturbation
$\hat{\xi}_{i,t}\sim\mathrm{Lap}(\Delta_m/\epsilon)$
for each DC $i$.

\STATE Construct the coordinated zero-sum perturbation as
$\xi_{i,t}
=
\hat{\xi}_{i,t}
-
\frac{1}{|\mathcal{I}|}
\sum_{j\in\mathcal{I}}\hat{\xi}_{j,t}$,
for all $i\in\mathcal{I}$.

\STATE Form the current per-slot optimization problem \textbf{P3}
using the observed system information, virtual queue states, and
fixed perturbations $\xi_{i,t}$.

\STATE Set ADMM iteration index $n=0$, primal tolerance
$\epsilon^{\mathrm{pri}}>0$, and penalty parameter $\rho>0$.
Initialize $\mu_{i,t}^{I,(0)}=0$ and $z_{i,t}^{I,(0)}=0$
for all $i\in\mathcal{I}$.

\REPEAT

    \FOR{each DC $i\in\mathcal{I}$}

        \STATE Solve the local subproblem
        \eqref{eq:local_subproblem_p3} to obtain
        $\Phi_{i,t}^{(n+1)}$.

        \STATE Compute and transmit the obfuscated migration variable
        $\tilde{m}_{i,t}^{I,(n+1)}
        =
        m_{i,t}^{I,(n+1)}
        +
        \zeta\xi_{i,t}$.

    \ENDFOR

    \STATE The coordinator updates $\mathbf{z}_{t}^{(n+1)}$
    according to \eqref{eq:global_z_update}.

    \STATE Each DC $i\in\mathcal{I}$ updates
    $\mu_{i,t}^{I,(n+1)}$ according to
    \eqref{eq:dual_update_node}.

    \STATE $n\leftarrow n+1$.

\UNTIL{the primal residual satisfies the prescribed convergence tolerance}

\STATE Implement the converged control decisions $\Phi_t$.

\STATE Update $S^{D}_{i,t+1}$, $E_{b,i,t+1}$, and $\tau_{i,t+1}$
according to \eqref{eq:batch_queue}, \eqref{eq:battery_soc},
and \eqref{eq:thermal_dynamic}, respectively.

\ENDFOR

\end{algorithmic}

\endgroup
\vspace{-0.2em}
\end{algorithm}

\section{Numerical Results}

In this section, we evaluate our proposed online distributed algorithm based on real-world traces. We first detail the simulation setup and parameter configurations. Subsequently, we validate the effectiveness of the proposed model and algorithm.

\par\medskip
\noindent\textit{A. Parameter Setup}\quad

\begin{figure}[htbp]
    \centering
    \includegraphics[width=1\linewidth]{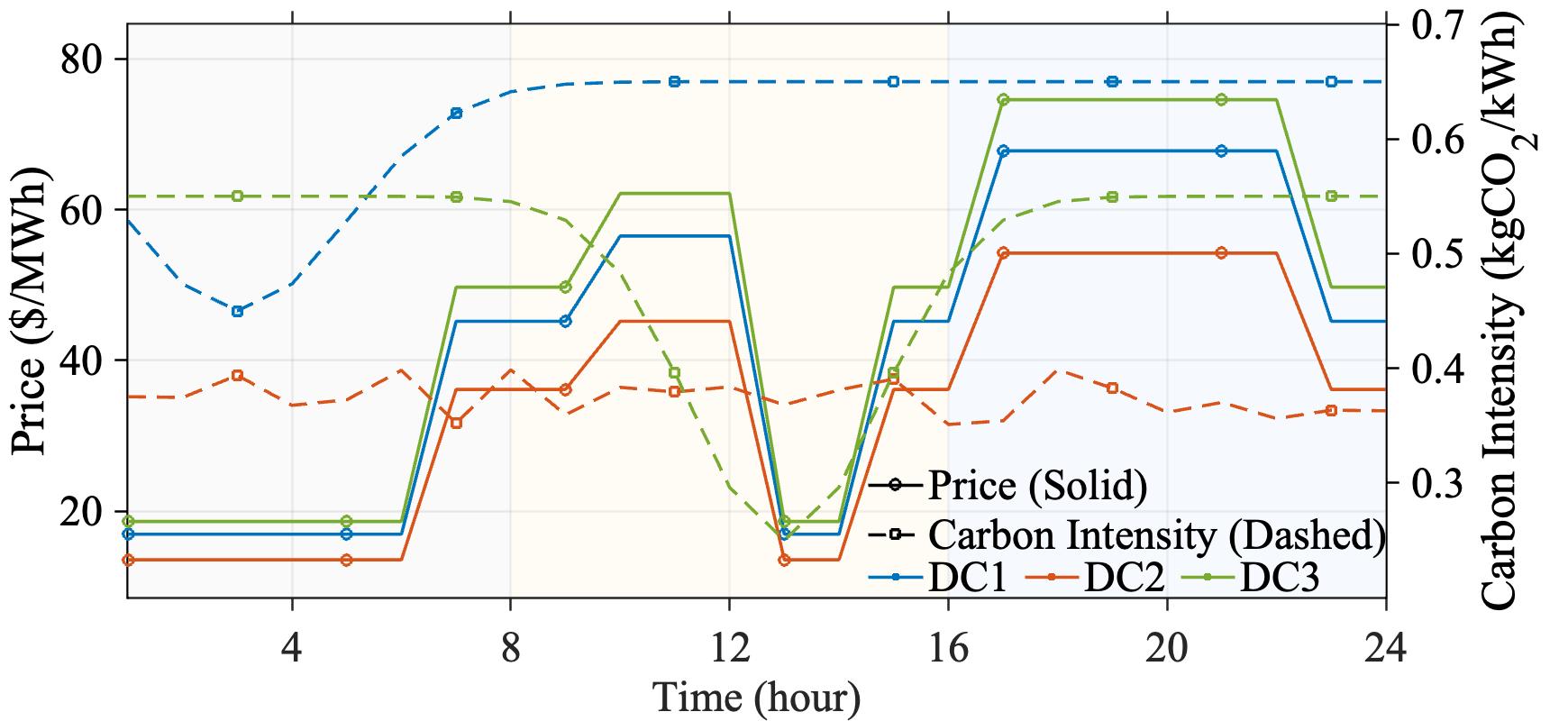}
    \caption{Electricity prices and carbon intensities.}
    \label{fig_param}
\end{figure}
We consider three interconnected DCs located in regions with heterogeneous energy resources. DC1 is wind-dominated with lower carbon intensity during nighttime, DC2 is hydro-based with consistently low carbon emissions and electricity prices, and DC3 features high PV penetration, offering a decarbonization window around midday. The hourly profiles of electricity prices $\lambda_{i,t}$ and carbon intensities $\rho_{i,t}^{oc}$ are shown in Fig.~\ref{fig_param}.

\begin{table}[h]
\centering
\caption{Parameters of Heterogeneous Server Clusters}
\label{table_cluster}
\renewcommand{\arraystretch}{1.2}
\begin{tabular}{l c c c c}
\hline
\textbf{Cluster Index} 
& \textbf{Cluster~1} 
& \textbf{Cluster~2} 
& \textbf{Cluster~3} 
& \textbf{Cluster~4} \\
\hline
Computing Capacity & Low & High & Low & High \\
Service Age & New & New & Old & Old \\
\hline
$S^{\mathrm{rate}}_{i,k}$ (req/s) & 20 & 30 & 20 & 30 \\
$P_{\mathrm{idle}}$ (kW) & 0.100 & 0.125 & 0.125 & 0.156 \\
$P_{\mathrm{peak}}$ (kW) & 0.200 & 0.250 & 0.250 & 0.312 \\
$\gamma_{i,k}$ & 1.00 & 1.10 & 2.00 & 2.20 \\
\hline
\end{tabular}
\end{table}
Each DC comprises four server clusters, with 1000 servers in each cluster, distinguished by computing capacity and service age, as summarized in Table~\ref{table_cluster}. The global simulation parameters are listed in Table~\ref{table_global}.
\begin{table}[t]
\centering
\caption{Global Simulation Parameters}
\label{table_global}
\renewcommand{\arraystretch}{1.15}
\begin{tabular}{ll|ll}
\hline
\textbf{Symbol} & \textbf{Value} & \textbf{Symbol} & \textbf{Value} \\
\hline
$\kappa_i^B$ & 0.002
& $\kappa_i^C$ & 0.008 \\
$\Delta m^{I,\mathrm{limit}}_{i,\mathrm{in}},
\Delta m^{I,\mathrm{limit}}_{i,\mathrm{out}}$
& $1.5 \times 10^4$ req
& $c_i^{\mathrm{trans}}$
& $10^{-4}$ \$/req \\
$E_b$ & 4,000 kWh
& $E_{b,i}^{\min}$ & $0.1E_b$ \\
$P_{b,i,t}^{c,\max}$ & 400 kW
& $E_{b,i}^{\max}$ & $0.9E_b$ \\
$P_{b,i,t}^{d,\max}$ & 400 kW
& $\Delta E_{b,i}$ & 400 kWh \\
$\eta_c$ & 0.95
& $k_i$ & 0.98 \\
$\eta_d$ & 0.95
& $C^M_{i,k}$ & 1200 kg \\
$\psi^{\mathrm{deg}}$ & 0.01 \$/kWh
& $\alpha_C$ & 0.1 \$/kgCO$_2$ \\
$\bar{t}^{\mathrm{delay}}_{i,k}$ & 0.167 s
& $L^{\mathrm{life}}_{i,k}$ & 35,040 h \\
\hline
\end{tabular}
\end{table}

\par\medskip
\noindent\textit{B. Performance Comparison}\quad

To evaluate the impact of workload migration and hardware aging awareness, four scenarios are compared:

\begin{itemize}
    \item \textbf{Benchmark}: No workload migration; no server aging model.
    \item \textbf{M2}: Workload migration enabled; no server aging model.
    \item \textbf{M3}: No workload migration; server aging model incorporated.
    \item \textbf{Proposed}: Both workload migration and server aging model enabled.
\end{itemize}

\begin{table}[htbp]
\centering
\caption{Comparison of Economic Cost and Total Carbon Emissions}
\label{tab:comparison}
\begin{tabular}{ccccc}
\toprule
 & Benchmark & M2 & M3 & Proposed \\
\midrule
Cost ($\times 10^3$ \$) & 26.84 & 24.97 & 24.93 & 23.47 \\
Cost Reduction & - & 7.0\% & 7.1\% & 12.6\% \\
Carbon Emission (t CO$_2$) & 170.07 & 156.54 & 157.00 & 148.04 \\
Carbon Reduction & - & 8.0\% & 7.7\% & 13.0\% \\
\bottomrule
\end{tabular}
\end{table}
Table~\ref{tab:comparison} compares the performance of different methods. Compared with the benchmark, M2 reduces total cost by 7.0\% and carbon emissions by 8.0\%. M2 transfers workloads to data centers with lower electricity prices and grid carbon intensities, thereby reducing electricity costs and operational carbon emissions. M3 reduces total cost by 7.1\% and carbon emissions by 7.7\%. Its aging model gives priority to server clusters with lower degradation rates, reducing hardware aging and replacement-induced embodied carbon emissions.
The proposed method jointly optimizes inter-data-center workload transfer and aging-aware server allocation. It transfers workloads to data centers with lower electricity prices and grid carbon intensities while prioritizing server clusters with lower degradation rates within each data center. Consequently, it achieves the lowest total cost (23.47 $\times 10^3$ \$) and carbon emissions (148.04 t CO$_2$), representing reductions of 12.6\% and 13.0\%, respectively.

\begin{figure}[h]
    \centering
    \includegraphics[width=1\linewidth,
    trim={50bp 5bp 40bp 50bp},
        clip]{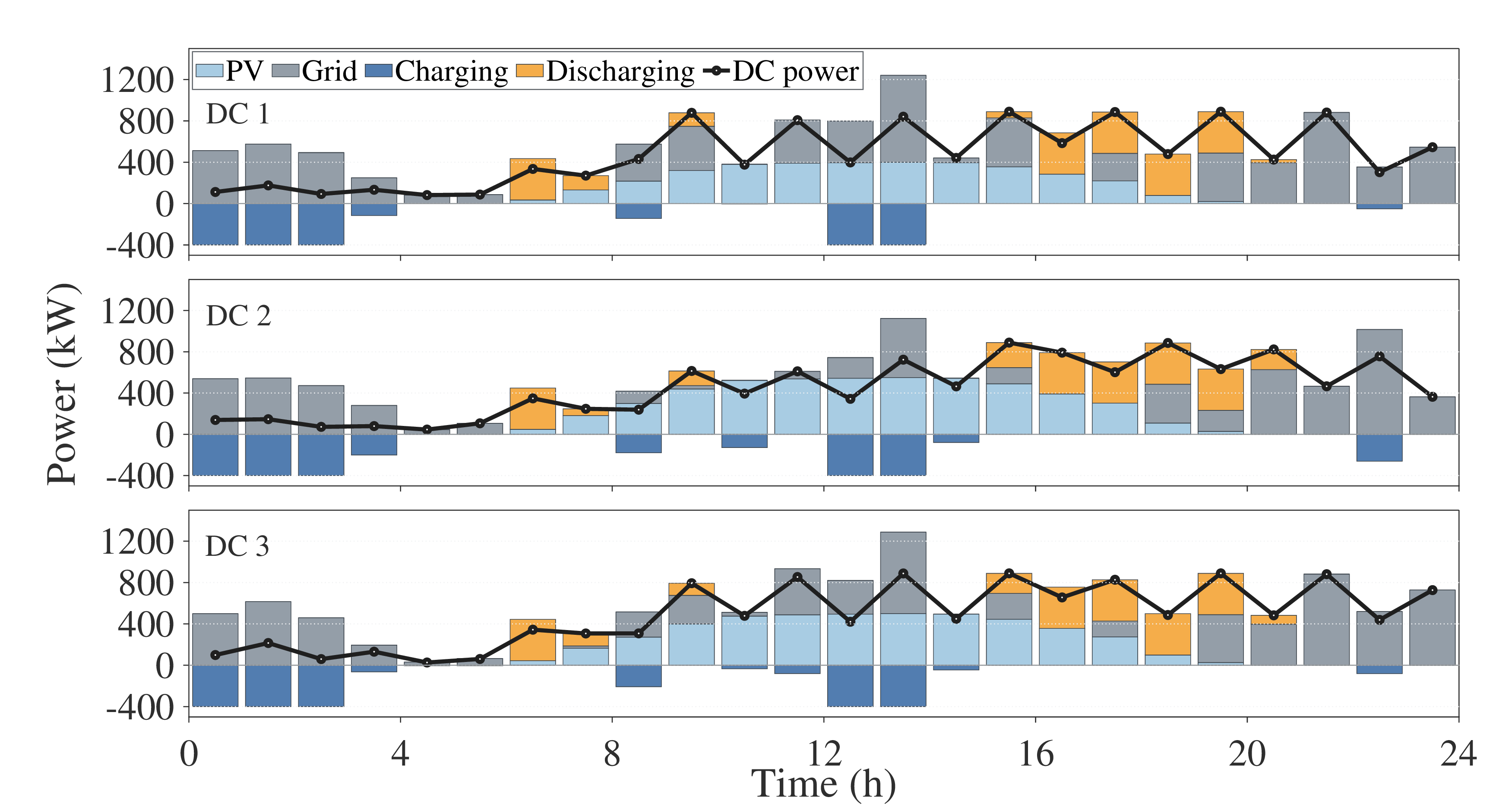}
  \caption{Solar power generation, grid power supply, battery
operation, and DC power consumption across the three data centers.}
    \label{fig_param1}
\end{figure}

\par\medskip
\noindent\textit{C. Operational Scheduling Analysis}\quad

 The power composition of each data center within this typical day is shown in Fig.~\ref{fig_param1}. During morning and midday periods, relatively low electricity prices and abundant PV generation encourage battery charging. Conversely, during evening peak hours, limited renewable supply, high electricity prices, and increased workload demand drive battery discharging.

Fig.~\ref{fig_number_comparison} shows the number of servers utilized
in each cluster across the three data centers over 24 hours. During the late-night and early-morning hours, workload demand is low, and the scheduler mainly uses Cluster~2, with limited use of Cluster~1. Both clusters age slowly, but Cluster~2 offers a higher processing rate and is therefore prioritized. During the daytime and evening peak periods, additional capacity is needed to process accumulated batch workloads and increased interactive demand. Cluster~3 and Cluster~4 are therefore used more extensively despite their higher aging rates. Their utilization decreases again when demand falls, limiting the use of servers that degrade faster.

Fig.~\ref{time} shows that Cluster~1 and Cluster~2 accumulate longer equivalent operating times than Cluster~3 and Cluster~4 across all three data centers. This is consistent with the utilization patterns in Fig.~\ref{fig_number_comparison} and confirms the scheduler's overall preference for clusters with lower aging rates.

\begin{figure}[h]
    \centering
    \includegraphics[
        width=\linewidth,
        trim={185bp 5bp 225bp 5bp},
        clip
    ]{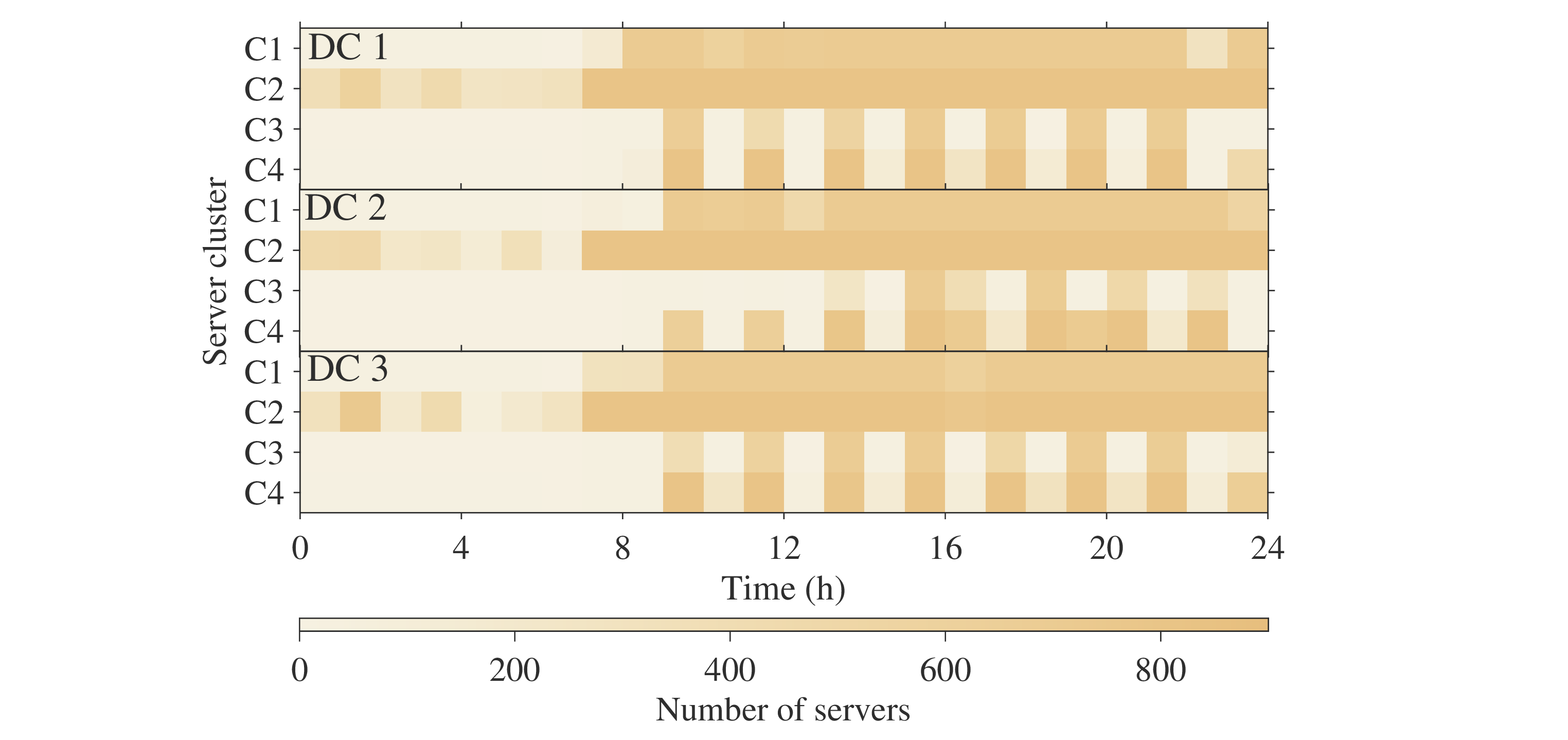}
    \caption{Number of servers utilized} in each cluster across the
    three data centers over 24 hours.
    \label{fig_number_comparison}
\end{figure}
\begin{figure}[h]
    \centering
    \includegraphics[
        width=\linewidth,
        trim={25bp 5bp 45bp 65bp},
        clip
    ]{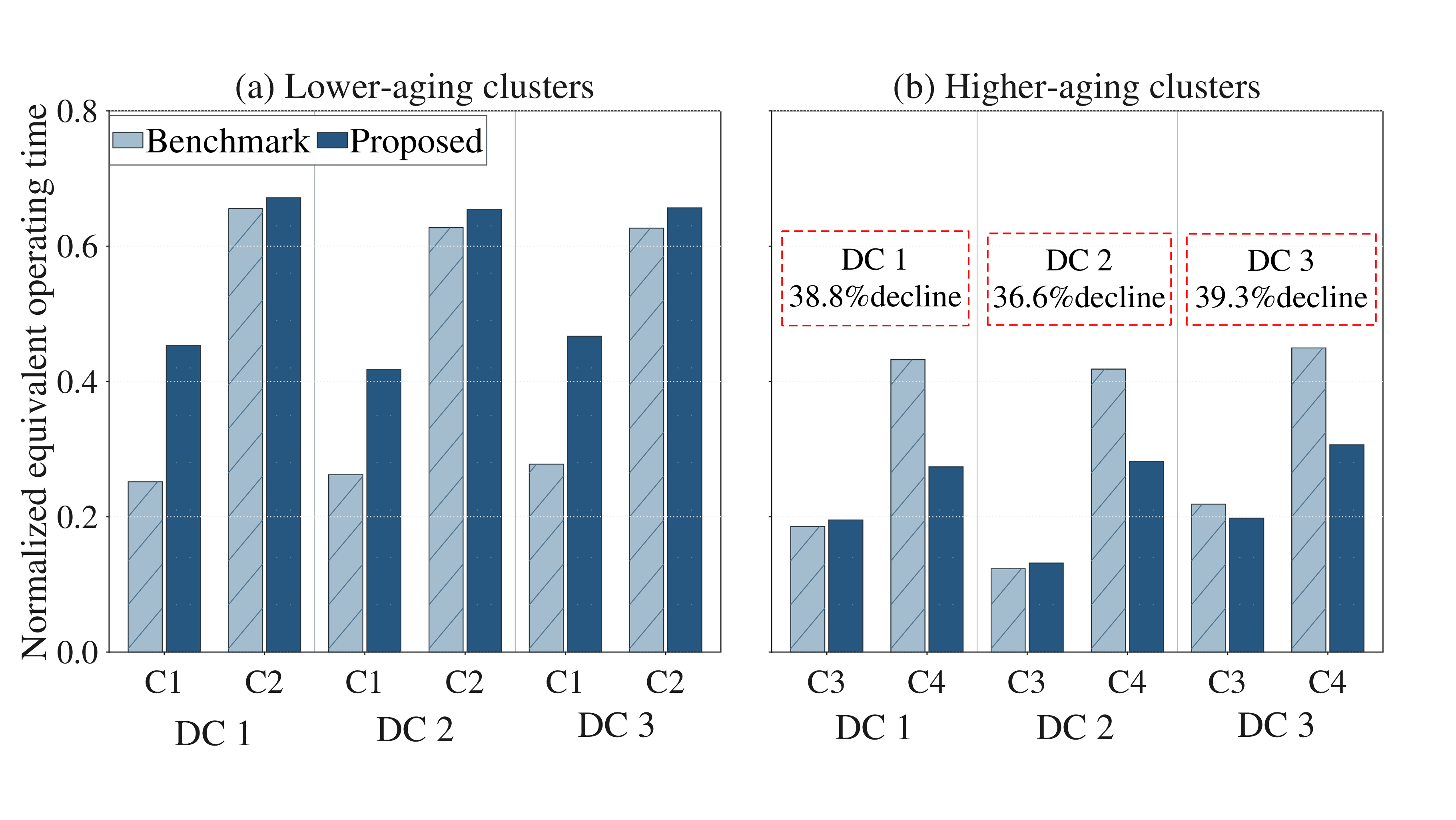}
  \caption{Normalized equivalent operating times (EOTs) of different server clusters under the benchmark and proposed methods.}
    \label{time}
\end{figure}

\par\medskip
\noindent\textit{D. Impact of TVQS on Battery Scheduling}\quad

Fig.~\ref{battery} compares the battery charging and discharging power
under fixed queue shifting and the proposed TVQS mechanism. Fixed queue
shifting results in relatively weak and scattered battery operation,
indicating conservative battery utilization. In contrast, TVQS promotes
charging during low-price or renewable-abundant periods, particularly
during the early hours and around hours 13--14, and releases the stored
energy during the evening high-price period. This pattern is driven by
the time-varying queue shift. An increase in renewable generation or a
decrease in the combined electricity and carbon price makes the shift
more negative, thereby favoring charging. Conversely, the opposite
conditions move the shift toward zero and favor discharging.
\begin{figure}[h]
    \centering
    \includegraphics[width=\linewidth,
    trim={5bp 5bp 5bp 5bp},
        clip]{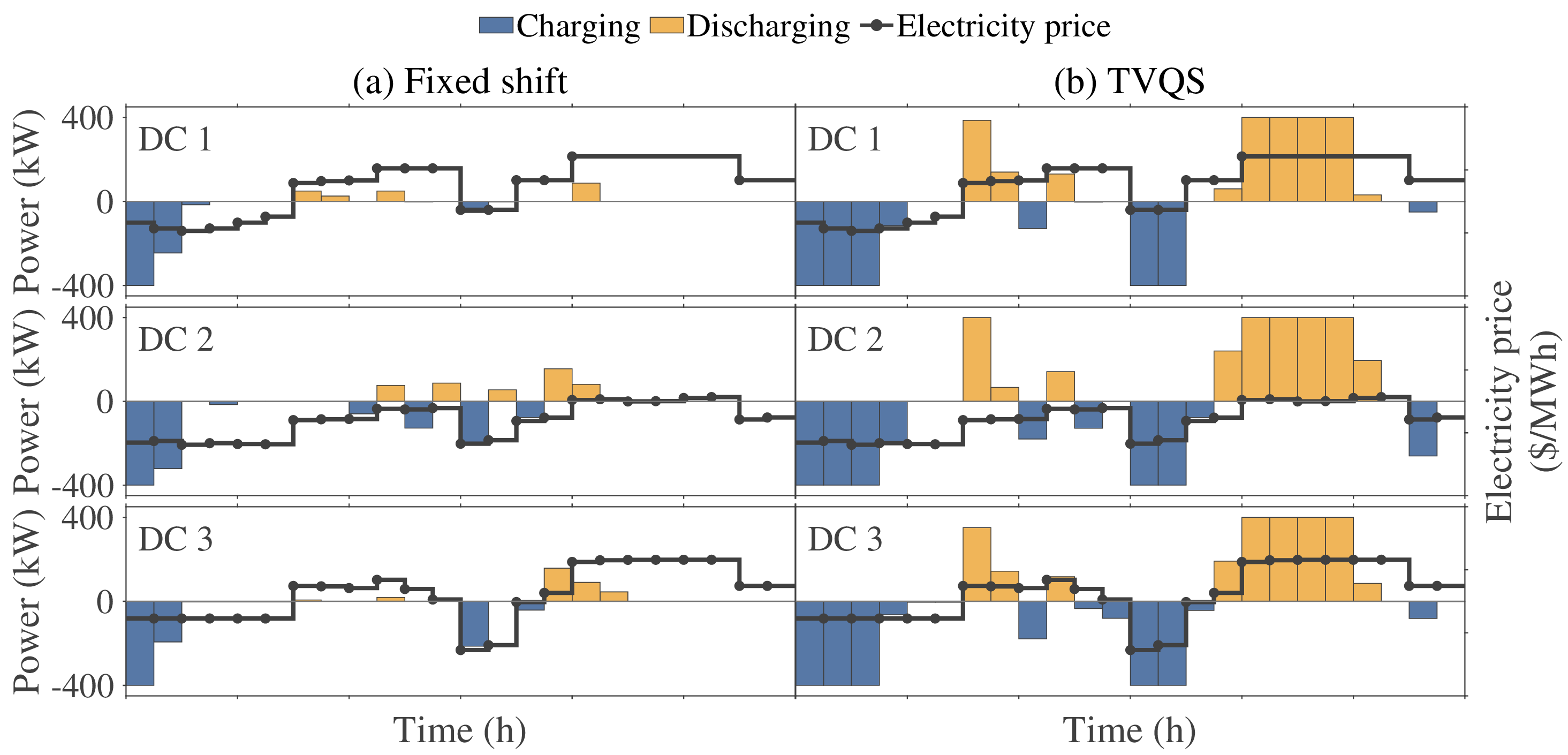}
    \caption{Battery charging and discharging power under fixed queue
    shifting and the proposed TVQS mechanism.}
    \label{battery}
\end{figure}

\par\vspace{1mm}
\noindent\textit{E. Convergence and Privacy-Preservation Performance of
ZSP-ADMM}\quad

Fig.~\ref{fig:admm} compares ZSP-ADMM under different perturbation-control parameters
($\epsilon\in\{0.2,0.3,0.4,0.5\}$) with standard ADMM.
All curves closely track the baseline and converge within 70 iterations,
indicating that the structured perturbation has little impact on
convergence.

Fig.~\ref{fig:transfer} compares the actual and obfuscated inter-data-center
migration signals over 24 hours. The insets report the mean absolute
error (MAE), root mean square error (RMSE), and maximum absolute
deviation. Across the three data centers, the MAE and RMSE remain within
$0.17$--$0.21$ and $0.22$--$0.26$, respectively, while the maximum
deviation is $0.43$ (all in units of $10^4$ requests). These bounded
deviations obscure the exact migration values while preserving their
overall temporal patterns for distributed coordination.

\begin{figure}[h]
    \centering
    \includegraphics[
        width=\linewidth,
        trim={32pt 5pt 40pt 18pt},
        clip
    ]{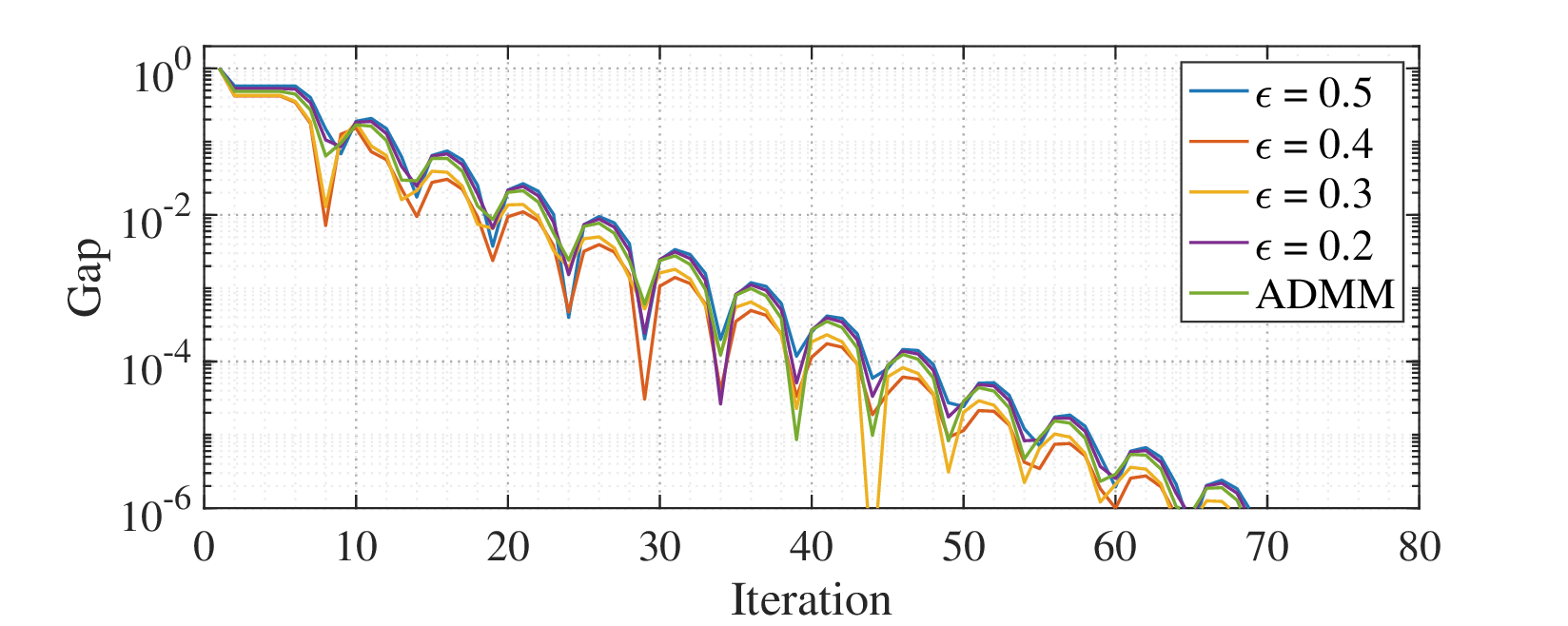}
    \caption{Convergence trajectories of the proposed ZSP-ADMM under
    different perturbation-control parameters.}
    \label{fig:admm}
\end{figure}

\begin{figure}[h]
    \centering
    \includegraphics[
        width=\linewidth,
        trim={25pt 8pt 30pt 55pt},
        clip
    ]{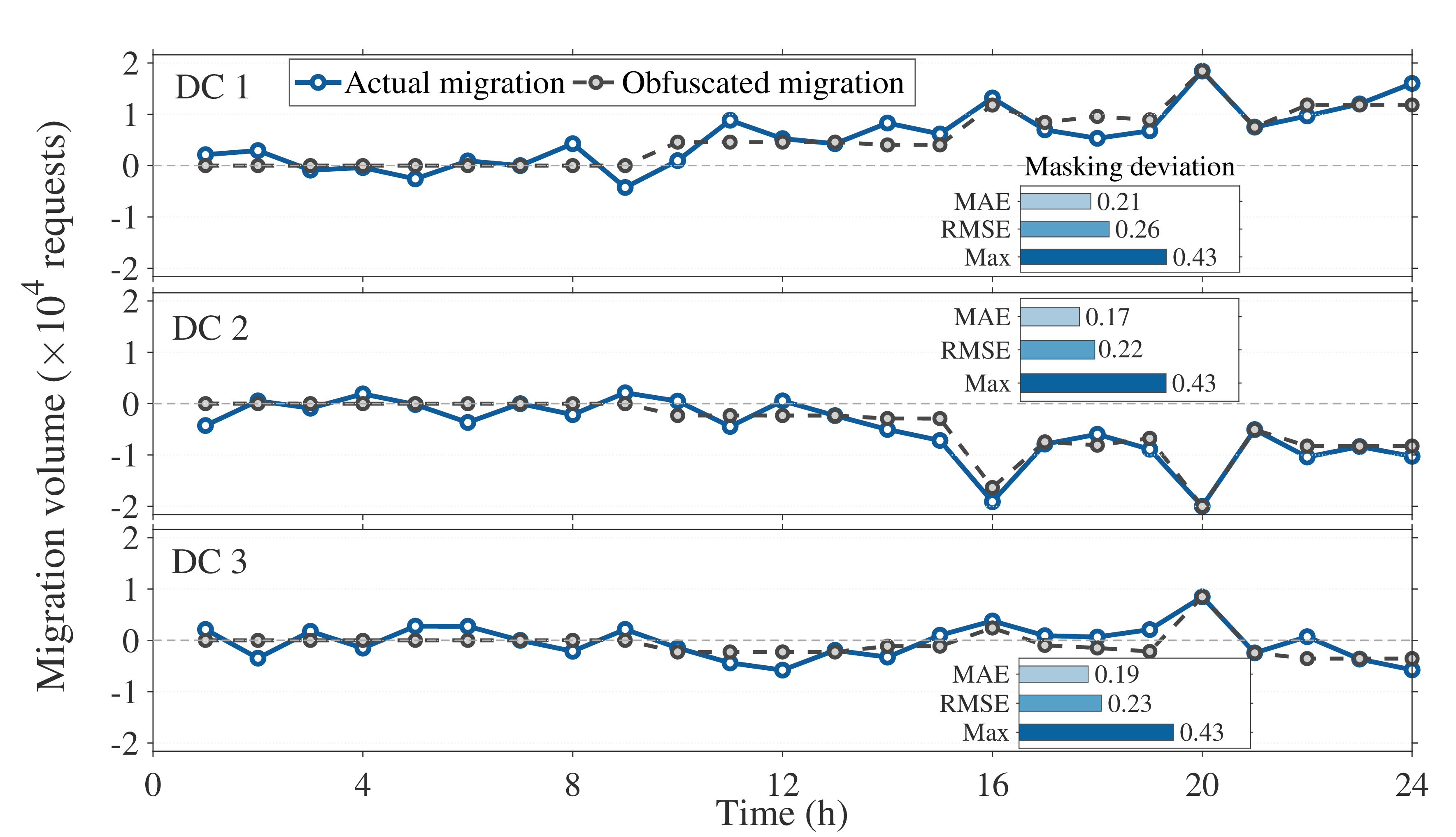}
    \caption{Privacy-preserving workload masking performance on
    transmitted migration signals.}
    \label{fig:transfer}
\end{figure}
\section{Conclusion}
This paper proposes an aging-aware online distributed scheduling framework for geo-distributed data centers. The model incorporates utilization-dependent hardware aging into lifecycle carbon accounting. A Lyapunov-based method with TVQS enables online scheduling and adaptive battery management, while ZSP-ADMM masks migration volumes during distributed coordination. Simulation results yield the following findings: 1) Combining workload migration with aging-aware scheduling reduces economic costs and lifecycle carbon emissions while preserving hardware health by prioritizing clusters with lower aging rates. 2) The proposed Lyapunov-based strategy with TVQS reduces conservative battery utilization by adapting charging and discharging decisions to real-time renewable generation and electricity prices. 3) Zero-sum perturbations mask local migration decisions while preserving global workload balance, with little change in the ZSP-ADMM convergence behavior.

\bibliographystyle{ieeetr} 
\bibliography{refs}       
\section*{Appendices}
\appendices

\section{Proof of the Lyapunov Drift Upper Bound} \label{appendix_A}

The conditional Lyapunov drift is defined as
$\Delta(\boldsymbol{\Theta}_t)
=
\mathbb{E}[L(\boldsymbol{\Theta}_{t+1}) - L(\boldsymbol{\Theta}_t)
\mid \boldsymbol{\Theta}_t]$.
We bound $\Delta(\boldsymbol{\Theta}_t)$ by squaring the dynamic transition equations of each queue independently.

\textbf{1) Batch Workload Queue:}
Squaring the queue dynamics
$S^D_{i,t+1}
=
\max\{S^D_{i,t}-w^D_{i,t},0\}+A^D_{i,t}$
yields:
\begin{equation}
    \frac{\alpha_i}{2}
    \left[
    (S^{D}_{i,t+1})^2 - (S^{D}_{i,t})^2
    \right]
    \le
    \alpha_i S^{D}_{i,t}
    (A^{D}_{i,t} - w^{D}_{i,t})
    + C_{S,i},
\end{equation}
where
$C_{S,i}
=
\frac{\alpha_i}{2}
\max\{(A_i^{D,\max})^2, (w_i^{D,\max})^2\}$
is a finite constant bounding the higher-order terms.

\textbf{2) Thermal Virtual Queue:}
Squaring the thermal transition
$\tilde{\tau}_{i,t+1}
=
\tilde{\tau}_{i,t}
+
\kappa_i^B P^{\mathrm{IT}}_{i,t}
-
\kappa_i^C P^{\mathrm{cool}}_{i,t}
-
\beta_{i,t}$
yields:
\begin{equation}
\begin{aligned}
    \frac{\nu_i}{2}
    \left[
    \tilde{\tau}_{i,t+1}^2
    -
    \tilde{\tau}_{i,t}^2
    \right]
    \le\;
    &\nu_i \tilde{\tau}_{i,t}
    \big(
    \kappa_i^B P^{\mathrm{IT}}_{i,t}
    -
    \kappa_i^C P^{\mathrm{cool}}_{i,t}
    \big) \\
    &-
    \nu_i \tilde{\tau}_{i,t}\beta_{i,t}
    +
    C_{\tau,i},
\end{aligned}
\end{equation}
where
$C_{\tau,i}
=
\frac{\nu_i}{2}
\max\{
(\kappa_i^B P_i^{\mathrm{IT,\max}})^2,
(\kappa_i^C P_i^{\mathrm{cool,\max}} + \beta_i^{\max})^2
\}$.
Based on \eqref{eq:temp_limits}, the absolute temperature deviation is strictly bounded by
\[
\tilde{\tau}_i^{\max}
=
\max\{
|\overline{T}_i - T_i^{\mathrm{ref}}|,
|\underline{T}_i - T_i^{\mathrm{ref}}|
\}.
\]
Thus, the control-independent cross-term
$-\nu_i \tilde{\tau}_{i,t}\beta_{i,t}$
has a finite upper bound
$C_{\beta,i}
=
\nu_i \tilde{\tau}_i^{\max}\beta_i^{\max}$.
Let
$C_{\tau,i}' = C_{\tau,i} + C_{\beta,i}$.
The thermal drift bound simplifies to:
\begin{equation}
    \frac{\nu_i}{2}
    \left[
    \tilde{\tau}_{i,t+1}^2
    -
    \tilde{\tau}_{i,t}^2
    \right]
    \le
    \nu_i \tilde{\tau}_{i,t}
    \big(
    \kappa_i^B P^{\mathrm{IT}}_{i,t}
    -
    \kappa_i^C P^{\mathrm{cool}}_{i,t}
    \big)
    +
    C_{\tau,i}'.
\end{equation}

\textbf{3) Battery Virtual Queue:}
Under the TVQS strategy, the shift parameter is time-varying. Let
$\Delta\varepsilon_{i,t}
=
\varepsilon_{i,t+1}
-
\varepsilon_{i,t}$.
The exact virtual queue transition is
\[
\tilde{E}_{i,t+1}
=
k_i \tilde{E}_{i,t}
+
\omega_{i,t}
+
(1-k_i)\varepsilon_{i,t}
+
\Delta\varepsilon_{i,t}.
\]
Squaring this transition yields:
\begin{equation}\label{eq:drift_expansion_energy}
\begin{aligned}
    \frac{\delta_i}{2}
    \left[
    \tilde{E}_{i,t+1}^2
    -
    \tilde{E}_{i,t}^2
    \right]
    &=
    \delta_i
    \Big[
    k_i \tilde{E}_{i,t}\omega_{i,t}
    +
    \frac{1}{2}\omega_{i,t}^2 \\
    &\qquad
    +
    (1-k_i)\varepsilon_{i,t}\omega_{i,t}
    \Big]
    +
    M_{i,t}
    +
    \Theta_{\varepsilon,i,t},
\end{aligned}
\end{equation}
where $M_{i,t}$ represents the bounding terms inherited from the physical dynamics:
\begin{equation}
    M_{i,t}
    =
    \frac{\delta_i}{2}
    \Big[
    (k_i^2 - 1)\tilde{E}_{i,t}^2
    +
    2k_i(1-k_i)\tilde{E}_{i,t}\varepsilon_{i,t}
    +
    (1-k_i)^2\varepsilon_{i,t}^2
    \Big].
\end{equation}

The term $\Theta_{\varepsilon,i,t}$ captures the additional terms introduced by the time variation of the shift parameter:
\begin{equation}
    \Theta_{\varepsilon,i,t}
    =
    \frac{\delta_i}{2}
    \Big[
    2
    \big(
    k_i \tilde{E}_{i,t}
    +
    \omega_{i,t}
    +
    (1-k_i)\varepsilon_{i,t}
    \big)
    \Delta\varepsilon_{i,t}
    +
    (\Delta\varepsilon_{i,t})^2
    \Big].
\end{equation}

The physical states and parameters strictly satisfy
$|\tilde{E}_{i,t}| \le \tilde{E}_i^{\max}$,
$|\omega_{i,t}| \le \omega_i^{\max}$, and
$|\Delta\varepsilon_{i,t}|
\le
\varepsilon_i^{\max}
-
\varepsilon_i^{\min}$,
where
\[
\tilde{E}_i^{\max}
\triangleq
\max
\left\{
\left|E_{b,i}^{\min}+\varepsilon_i^{\min}\right|,
\left|E_{b,i}^{\max}+\varepsilon_i^{\max}\right|
\right\}.
\]
Therefore, $M_{i,t}$ and $\Theta_{\varepsilon,i,t}$ are uniformly bounded over all feasible control actions. Hence, there exists a finite constant $C_{E,i}$ such that
\begin{equation}
    M_{i,t} + \Theta_{\varepsilon,i,t}
    \le
    C_{E,i}.
\end{equation}

\textbf{4) Total Drift Upper Bound:}
Taking the conditional expectation of the summed individual queue bounds yields the total drift. Let
\[
C_0
=
\sum_{i \in \mathcal{I}}
\left(
C_{S,i}
+
C_{\tau,i}'
+
C_{E,i}
\right)
\]
be the aggregated finite constant.
The conditional one-slot Lyapunov drift is bounded by:
\begin{equation}
\begin{aligned}
\Delta(\boldsymbol{\Theta}_t)
\le
C_0
+
\sum_{i \in \mathcal{I}}
\Big\{
&\alpha_i S^{D}_{i,t}
\big(
A^{D}_{i,t}
-
w^{D}_{i,t}
\big) \\
&\hspace{-2.5em}
+\nu_i \tilde{\tau}_{i,t}
\big(
\kappa_i^B P^{\mathrm{IT}}_{i,t}
-
\kappa_i^C P^{\mathrm{cool}}_{i,t}
\big) \\
&\hspace{-2.5em}
+\delta_i
\big[
k_i\tilde{E}_{i,t}\omega_{i,t}
+ \frac{1}{2}\omega_{i,t}^2
+ (1-k_i)\varepsilon_{i,t}\omega_{i,t}
\big]
\Big\}.
\end{aligned}
\end{equation}
This completes the proof of the stated upper bound on the conditional one-slot Lyapunov drift.

\section{Proof of the convexity of the problem}
\label{appendix_B}

This section establishes the convexity of \textbf{P3} for any fixed
realization of the zero-sum perturbations.
The active-server variables are treated as continuous.
At each time slot, the current system states, exogenous inputs,
queue shifts, and perturbation scaling factor are fixed.
We establish the convexity of both the objective and the feasible region.

\par\noindent
\textit{1) Convexity of the objective:}
Considering the contribution of cluster $(i,k)$ to the aging-amortized
embodied carbon cost $f^{c,m}_{i,t}$, and substituting the utilization
definition
$u_{i,k,t}=W_{i,k,t}/(N^{A}_{i,k,t}S^{\mathrm{rate}}_{i,k})$,
the corresponding per-cluster cost can be written as
\begin{equation}
f^{c,m}_{i,k,t}(W_{i,k,t},N^{A}_{i,k,t})
=
A_{i,k}N^{A}_{i,k,t}
\exp\left(
B_{i,k}\frac{W_{i,k,t}}{N^{A}_{i,k,t}}
\right),
\end{equation}
where
$A_{i,k}
=
\alpha_C C^{M}_{i,k}\eta_{i,k}\Delta t/L^{\mathrm{life}}_{i,k}>0$
and
$B_{i,k}
=
\gamma_{i,k}/S^{\mathrm{rate}}_{i,k}>0$
are positive constants.

We define the base function $h(x)=\exp(B_{i,k}x)$, which is convex
for $x\geq0$. According to convex optimization theory
\cite{vandenberghe2004convex}, the perspective function
$g(x,y)=yh(x/y)$ preserves joint convexity with respect to
$(x,y)$ on the domain $y>0$.

By letting $x=W_{i,k,t}$ and $y=N^{A}_{i,k,t}$,
$f^{c,m}_{i,k,t}$ is the perspective function of $A_{i,k}h(x)$:
\begin{equation}
f^{c,m}_{i,k,t}(x,y)
=
A_{i,k}yh\left(\frac{x}{y}\right).
\end{equation}
The lower bound $N^{A}_{i,k,t}\geq N^{\min}_{i,k}>0$
ensures that this perspective is well defined.
Thus, the embodied carbon cost is jointly convex in
$(W_{i,k,t},N^{A}_{i,k,t})$.
Since $W_{i,k,t}=w^I_{i,k,t}+w^D_{i,k,t}$ is affine,
convexity is preserved in the original allocation variables.

The electricity purchase, battery cycling, and operational
carbon costs are affine, while the transfer cost
$c_i^{\mathrm{trans}}\max(m^I_{i,t},0)$ is convex for
$c_i^{\mathrm{trans}}\geq0$.
The quadratic drift term $\delta_i\omega_{i,t}^2/2$ is convex
for $\delta_i\geq0$, since
$\omega_{i,t}=P^c_{b,i,t}\Delta t\eta_c
-P^d_{b,i,t}\Delta t/\eta_d$ is affine.
All remaining drift terms are affine for fixed queue states
and shifts. Therefore, the complete objective is convex for $V>0$.

\par\noindent
\textit{2) Convexity of the feasible region:}
\begingroup
Using the utilization definition, the delay constraint
\eqref{eq:SLA_delay_limit}, on its positive-denominator domain,
is equivalently expressed as
\begin{equation}
W_{i,k,t}
\leq
\left(
S^{\mathrm{rate}}_{i,k}
-\frac{1}{\bar{t}^{\mathrm{delay}}_{i,k}}
\right)N^{A}_{i,k,t}.
\end{equation}
This is an affine inequality because the processing rate and
delay limit are fixed positive parameters.
Similarly, the utilization bounds become
$0\leq W_{i,k,t}\leq
N^{A}_{i,k,t}S^{\mathrm{rate}}_{i,k}$.

The computing power relation \eqref{eq:IT_power_cluster}
can be rewritten as
\begin{equation}
\begin{aligned}
P^{\mathrm{IT}}_{i,k,t}
={}&P^{\mathrm{idle}}_{i,k}N^{A}_{i,k,t}
+\frac{P^{\mathrm{peak}}_{i,k}-P^{\mathrm{idle}}_{i,k}}
{S^{\mathrm{rate}}_{i,k}}W_{i,k,t},
\end{aligned}
\end{equation}
which is affine. The workload allocation, aggregate computing
power, and DC power-balance relations are also affine.
The remaining migration, nonnegativity, server-capacity,
power, and state bounds define affine inequalities for the
fixed per-slot parameters.

For fixed $\xi_{i,t}$ and $\zeta$, the obfuscated migration
relation \eqref{eq:obfuscated_migration} and consensus constraints
\eqref{eq:node_admm_consensus} take the affine form
\[
\tilde{m}^{I}_{i,t}=m^{I}_{i,t}+\zeta\xi_{i,t},
\qquad
\tilde{m}^{I}_{i,t}=z^{I}_{i,t},
\qquad
\sum_{i\in\mathcal{I}}z^{I}_{i,t}=0.
\]
Consequently, the feasible region is an intersection of affine
equality and inequality sets and is convex.
Together with the convexity of the objective, this establishes
that \textbf{P3} is a convex optimization problem.
\par
\endgroup

\section{Proof of the Equivalence Between P2 and P3}
\label{appendix_C}

\begingroup

Fix any zero-sum perturbation satisfying
$\sum_{i\in\mathcal I}\xi_{i,t}=0$.
The two problems impose the same local constraints on the physical
decisions $\Phi_t$. We show that their remaining constraints are
equivalent.

For any feasible solution $\Phi_t$ of \textbf{P2}, its migration
decisions satisfy $\sum_{i\in\mathcal I}m_{i,t}^{I}=0$.
Set
\[
\tilde m_{i,t}^{I}
=z_{i,t}^{I}
=m_{i,t}^{I}+\zeta\xi_{i,t}.
\]
Then
\[
\sum_{i\in\mathcal I}z_{i,t}^{I}
=
\sum_{i\in\mathcal I}m_{i,t}^{I}
+\zeta\sum_{i\in\mathcal I}\xi_{i,t}
=0.
\]
Thus, $\Phi_t$ can be extended to a feasible solution of
\textbf{P3}.

Conversely, for any feasible solution of \textbf{P3},
$\tilde m_{i,t}^{I}=z_{i,t}^{I}
=m_{i,t}^{I}+\zeta\xi_{i,t}$ and
$\sum_{i\in\mathcal I}z_{i,t}^{I}=0$ imply
\[
\sum_{i\in\mathcal I}m_{i,t}^{I}
=
\sum_{i\in\mathcal I}z_{i,t}^{I}
-\zeta\sum_{i\in\mathcal I}\xi_{i,t}
=0.
\]
Hence, its physical decisions $\Phi_t$ are feasible for
\textbf{P2}. The two problems therefore have the same feasible
physical decisions.

Finally, the objective functions of \textbf{P2} and \textbf{P3}
are identical for the same $\Phi_t$ and do not depend on
$\mathbf z_t$. Every corresponding pair of feasible solutions
thus has the same objective value. Consequently, \textbf{P2}
and \textbf{P3} have the same optimal objective value and
corresponding optimal physical decisions.

\par
\endgroup
\end{document}